\documentclass[journal,12pt,onecolumn,draftclsnofoot,]{IEEEtran}

\usepackage[T1]{fontenc}
\usepackage{color}
\usepackage{varioref}
\usepackage{textcomp}
\usepackage{amsthm}
\usepackage{amsmath}
\usepackage{amssymb}
\usepackage{graphicx}
\usepackage{setspace}
\usepackage{esint}
\usepackage{algorithm}
\usepackage{algpseudocode}
\usepackage{pifont}
\usepackage{wrapfig}
\usepackage{multirow}
\usepackage{tabu}
\usepackage{amsmath}

\usepackage{enumitem}
\usepackage{cite}
\usepackage{cleveref}
\makeatletter

\newcommand{\Rmnum}[1]{\expandafter\@slowromancap\romannumeral #1@}
\makeatother

\newtheorem{theorem}{Theorem}
\newtheorem{lemma}{Lemma}
\newtheorem{remark}{Remark}

\newtheorem{corollary}{Corollary}

\newtheorem{thm}{\protect\theoremname}
\newtheorem{prop}[thm]{\protect\propositionname}
\providecommand{\propositionname}{Proposition}

\ifCLASSINFOpdf

\else

\fi

\hyphenation{op-tical net-works semi-conduc-tor}

\usepackage[T1]{fontenc}
\usepackage{etoolbox}

\makeatletter
\patchcmd{\maketitle}{\@fnsymbol}{\@alph}{}{}  
\makeatother

\title{On the Capacity Region of a Cache-Aided Gaussian Broadcast Channel with Multi-Layer Messages}
\author{\IEEEauthorblockN{Mohammad Mohammadi Amiri and
Deniz G\"und\"uz}

}
\date{}


\begin{document}

\maketitle


\begin{abstract}
A cache-aided $K$-user Gaussian broadcast channel (BC) is studied. The transmitter has a library of $N$ files, from which each user requests one. The users are equipped with caches of different sizes, which are filled without the knowledge of the user requests in a \textit{centralized} manner. Differently from the literature, it is assumed that each file can be delivered to different users at different rates, which may correspond to different quality representations of the underlying content, e.g., scalable coded video segments. Accordingly, instead of a single achievable rate, the system performance is characterized by a rate tuple, which corresponds to the vector of rates users' requests can be delivered at. The goal is to characterize the set of all achievable rate tuples for a given total cache capacity by designing joint cache and channel coding schemes together with cache allocation across users. Assuming that the users are ordered in increasing channel quality, each file is coded into $K$ layers, and only the first $k$ layers of the requested file are delivered to user $k$, $k= 1, \ldots, K$. Three different coding schemes are proposed, which differ in the way they deliver the coded contents over the BC; in particular, \textit{time-division}, \textit{superposition}, and \textit{dirty paper coding} schemes are studied. Corresponding achievable rate regions are characterized, and compared with a novel outer bound. To the best of our knowledge, this is the first work studying the delivery of files at different rates over a cache-aided noisy BC.      
\end{abstract}


\section{Introduction}\label{Intro}
\makeatletter{\renewcommand*{\@makefnmark}{}
\footnotetext{Part of this work was presented at the IEEE International Symposium on Information Theory, Colorado, USA, June 2018 [10].}\makeatother}In the \textit{coded caching} framework introduced in \cite{MaddahAliCentralized}, transmission is performed over two phases: in the \textit{placement phase}, which takes place during off-peak hours, users fill their caches without knowing the particular demands. Once the demands are revealed, they are satisfied simultaneously over the \textit{delivery phase}. Here, we consider a Gaussian broadcast channel (BC) from the server to the users during the delivery phase. Cache-aided Gaussian BC is studied in \cite{HuangFadingChannelcodedcaching} with and without fading, and in \cite{PooyaCaireKhalajPhysicalLayerJournal,PetrosEliaTopological} focusing on the high SNR regime. A packet-erasure BC is considered in \cite{ShirinErasureChannelJournal} and \cite{MohammadDenizPacketErasureJournal}. A degraded BC is considered in \cite{ShirinWiggerYenerCacheAssingment}, where the placement phase is performed in a centralized manner with the full knowledge of the channel during the delivery phase. In \cite{MohammadDenizJSACPower} delivery over a Gaussian BC is studied from an energy efficiency perspective, assuming that the channel conditions in the delivery phase are not known during the placement phase, for both centralized and decentralized caching scenarios.

In most of the existing literature on coded caching, the key assumption is that the files in the library are coded at a single common rate, and each user requests one file from the library in its entirety. Accordingly, the objective function in \cite{ShirinErasureChannelJournal,MohammadDenizPacketErasureJournal,ShirinWiggerYenerCacheAssingment} is to maximize the common rate of the messages that can be delivered to all the users, and the supremum of the achievable rates is defined as the \textit{capacity} of the caching network. In \cite{QianqianDenizDifDisReq}, the authors relaxed this assumption and allowed each user to request the files at a different quality, and equivalently, at a different rate. However, the required rates at which the contents must be delivered are assumed to be given as part of the problem definition in \cite{QianqianDenizDifDisReq}, and the goal is to find the minimum number of bits that must be delivered over an error-free shared delivery channel \cite{QianqianDenizDifDisReq}. In this work, similarly to \cite{QianqianDenizDifDisReq}, we allow the users to request the files at different rates; however, differently from \cite{QianqianDenizDifDisReq}, considering a Gaussian BC in the delivery phase, we aim at characterizing the rate tuples at which the requested contents can be delivered to the users \cite{MohammadDenizMutltiLayerISIT18}.

We argue that this formulation allows us to better exploit the asymmetric resources available to users for content delivery over a noisy BC. To see the difference between the scalar capacity definition used in \cite{ShirinWiggerYenerCacheAssingment} and the capacity region formulation proposed here, consider a Gaussian BC without any caches, i.e., $M=0$. In this case, the capacity as defined in \cite{ShirinWiggerYenerCacheAssingment} is limited by the rate that can be delivered to the worst user, whereas with our formulation any rate tuple within the capacity region of the underlying BC is achievable, providing a much richer characterization of the performance for cache-aided delivery over a noisy BC.

The motivation here is to deliver the contents at higher rates to users with better channels, rather than being limited by the weak users. As proposed in \cite{QianqianDenizDifDisReq}, the multiple rates of the same file may correspond to the video files in the library encoded into multiple quality layers using scalable coding, so the user with a higher delivery rate receives a better quality description of the same file. Accordingly, each file in the library is coded into $K$ layers, $K$ being the number of users, ordered in increasing channel qualities, where user $k$ receives layers 1 to $k$ of its request, $k=1, \ldots, K$. We consider a centralized placement phase, and assume that the channel qualities of the users in the delivery phase are known in advance. By allowing users to have different cache capacities (similarly to \cite{MohammadQianqianDenizDifCacheCap} considering an error-free shared link during the delivery phase), we consider a total cache capacity in the network as a constraint, and optimize cache allocation across the users and different layers of the files. Contents cached during the placement phase provide multicasting opportunities to the server to deliver the missing parts in the same layer of the files to different users. When delivering these coded contents to users over the underlying BC, we consider three different techniques. Corresponding coding schemes are called  joint cache and time-division coding (CTDC), joint cache and superposition coding (CSC), and joint cache and dirty paper coding (CDPC). We also present an outer bound on the rate region when the placement phase is constrained to uncoded caching, and compare it with the achievable rate tuples obtained though the proposed coding schemes.

\textit{Notations:} $\mathbb{R}$ and $\mathbb{R}^{++}$ represent sets of real values and positive real values, respectively. For any arbitrary non-empty set $\cal G$ with cardinality $\left| \mathcal G \right|$, we denote the $\binom{\left| \mathcal G \right|}{i}$ $i$-element subsets of $\cal G$ by $\mathcal{S}^i_{\mathcal{G},1}, \dots, \mathcal{S}^i_{\mathcal{G},\binom{\left| \mathcal G \right|}{i}}$, for $i=1, ..., \left| \mathcal G \right|$. For $g \notin \mathcal G$, we define $\left\{ \mathcal G,g \right\}  \buildrel \Delta \over =  \mathcal{G} \bigcup \left\{ g \right\}$, and for $\mathcal{H} \subset \mathcal{G}$, $\mathcal{G} \backslash \mathcal{H}$ represents $\left\{ j:j \in \mathcal{G}, j \notin \mathcal{H} \right\}$. For two integers $i$ and $j$, $j \ge i$, $[i:j]$ denotes the set $\{ i,i+1, ..., j \}$. For any positive real number $q$, we define $[q] \triangleq \{1, \ldots, \lceil q\rceil\}$. We define, for two real values $p \ge 0$ and $q > 0$, $C_q^p \buildrel \Delta \over = \frac{1}{2} {\log _2}\left( 1 + p/q \right)$, and $\bar{p} \buildrel \Delta \over = 1-p$. Notation $\bar \oplus$ represents bitwise XOR operation where the arguments are first zero-padded to have the same length as the longest argument. $\mathcal{N} \left( 0,a^2 \right)$ denotes a zero-mean normal distribution with variance $a^2$.

\section{System Model and Preliminaries}\label{SystemModel}

We consider cache-aided content delivery over a $K$-user Gaussian BC. The transmitter has a library of $N$ files, $\mathbf{W} \buildrel \Delta \over = W_1, ..., W_N$. File $W_j$ is coded into $K$ layers $W_{j}^{(1)}, \dots, W_{j}^{(K)}$, such that layer $W_{j}^{(l)}$ is distributed uniformly over the set $\left[ \left\lceil 2^{nR^{(l)}} \right\rceil \right]$, where $R^{(l)}$ represents the rate of the $l$-th layer and $n$ denotes the blocklength, for $j=1, ..., N$, and $l=1, ..., K$. We denote the $l$-th layers of all the files by $\textbf{W}^{(l)} \buildrel \Delta \over = W_1^{(l)}, ..., W_N^{(l)}$, for $l \in [K]$.

Assume that user $k$, $k \in [K]$, has a cache of capacity $nM_k$ bits, which is filled during the \textit{placement phase} without the knowledge of the user demands. 
User demands are revealed and satisfied simultaneously in the \textit{delivery phase}. Each user requests a single file from the library, where $W_{d_k}$, $d_k \in [N]$, denotes the file requested by user $k \in [K]$. For a demand vector $\textbf{d} \buildrel \Delta \over =(d_1, ..., d_K)$, the users are served by a common message $X^n (\textbf{W}) \buildrel \Delta \over = \left( X_1 (\textbf{W}), \dots, X_n (\textbf{W}) \right)$ satisfying the average power constraint. User $k$, $k \in [K]$, receives $Y^n_k (\textbf{W}) \buildrel \Delta \over = \left( Y_{k,1} (\textbf{W}), \dots, Y_{k,n} (\textbf{W}) \right)$ through a Gaussian channel 
\begin{equation}\label{ChannelModel} 
{Y^n_{k}(\textbf{W})} = {X^n(\textbf{W})} + {Z^n_{k}},
\end{equation}
where $Z^n_k \buildrel \Delta \over = \left( Z_{k,1}, \dots, Z_{k,n} \right)$, and $Z_{k,i}$ is the independent zero-mean real Gaussian noise with variance $\sigma_k^2$ at user $k$ at the $i$-th channel use. Without loss of generality we order the users in increasing channel quality, i.e., we assume that $\sigma _1^2 \ge \sigma _2^2 \ge \cdots \ge \sigma _K^2$. We define $\boldsymbol{\sigma} \buildrel \Delta \over = \left( \sigma_1, \dots, \sigma_K \right)$.

Placement phase is performed in a centralized manner assuming $\boldsymbol{\sigma}$ is known. An $\left( n, R^{(1)}, \ldots, \right.$ $\left. R^{(K)}, M_1, \ldots, M_K \right)$ code consists of the following:
\begin{itemize}
\item $K$ caching functions ${\phi _{k}}$, $k \in [K]$, where
\begin{equation}\label{CachingFunction}
    {\phi _{k}}: {\left\{ {\left[ {\left\lceil {{2^{n{R^{(1)}}}}} \right\rceil } \right] \times \cdots \times \left[ {\left\lceil {{2^{n{R^{(K)}}}}} \right\rceil } \right]} \right\}^N} \times {\mathbb{R}^{++}}^{K} \to \left[ \left\lfloor 2^{nM_k} \right\rfloor \right]
\end{equation}
maps $\textbf{W}$ and $\boldsymbol{\sigma}$ to the cache content $U_k$ of user $k$, i.e., $U_k = {\phi _{k}} \left( \textbf{W}, \boldsymbol{\sigma} \right)$. 
\item An encoding function
\begin{equation}\label{EncodingFunction}
    \psi: {\left\{ {\left[ {\left\lceil {{2^{n{R^{(1)}}}}} \right\rceil } \right] \times \cdots \times \left[ {\left\lceil {{2^{n{R^{(K)}}}}} \right\rceil } \right]} \right\}^N} \times {\mathbb{R}^{++}}^{K} \times \left[N\right]^K \to \mathbb{R}^n,
\end{equation} 
which generates the channel input as $X^n (\textbf{W}) = \psi \left( \textbf{W}, \boldsymbol{\sigma},\textbf{d} \right)$, for demand vector $\textbf{d}$, satisfying the average power constraint $\frac{1}{n}\sum\nolimits_{i = 1}^n {X_i^2(\textbf{W})} \le P$.
\item $K$ decoding functions ${\mu _{k}}$, $k \in [K]$, where, for a demand vector $\textbf{d}$,
\begin{align}\label{DecodingFunction}
    {\mu _{k}}: \mathbb{R}^n &\times \left[ \left\lfloor 2^{nM_k} \right\rfloor \right] \times \left[ N \right]^K \to {\left[ {\left\lceil {{2^{n{R^{(1)}}}}} \right\rceil } \right] \times \cdots \times \left[ {\left\lceil {{2^{n{R^{(k)}}}}} \right\rceil } \right]}
\end{align}
reconstructs the layers ${\hat W}_{d_k}^{(1)}, \dots, {\hat W}_{d_k}^{(k)}$ from the channel output $Y_k^n \left( \textbf{W} \right)$ and cache content $U_k$. 
\end{itemize}

The probability of error is defined as
${P_{e}} \buildrel \Delta \over = \Pr \left\{ \bigcup\nolimits_{\textbf{d} \in {[N]}^K}\bigcup\nolimits_{k = 1}^K \bigcup\nolimits_{l = 1}^k {\left\{ {{{\hat W}_{{d_k}}^{(l)}} \ne {W_{{d_k}}^{(l)}}} \right\}}  \right\}$.

Note that the generated code implicitly assumes that user $k$ is interested only in the first $k$ layers of its demand, i.e., $W_{d_k}^{(1)}, \dots, W_{d_k}^{(k)}$, for $k \in [K]$. In a more general formulation, we could instead consider an arbitrary ordering of the rates among the users, but here the goal is to deliver a higher rate to a user with a better channel.

For a given total cache capacity $M$, we say that the rate tuple $\left( R_1, \dots, R_K \right)$ is achievable if for every $\varepsilon > 0$, there exists an $\left( n, R^{(1)}, \dots,R^{(K)}, M_1, \dots, M_K \right)$ code, which satisfies ${P_{e}} < \varepsilon$, $R_k \le \sum\nolimits_{l=1}^{k} R^{(l)}$, and $\sum\nolimits_{k=1}^{K} M_k \le M$. For average power constraint $P$ and a total cache capacity $M$, the capacity region $\mathcal{C} (P,M)$ of the caching system described above is defined as the closure of the all achievable rate tuples. Our goal is to find inner and outer bounds on $\mathcal{C} (P,M)$.



Next, we present some definitions that will simplify our ensuing presentation. For a fixed value of $t$, $t \in [K-1]$, we define $g_l \buildrel \Delta \over = \sum\nolimits_{j=1}^{l} \binom{K-j}{t}$, $\forall l \in [K-t]$, and let $g_0 =0$. We note that $g_{K-t} = \binom{K}{t+1}$. We denote the set of users $[l:K]$ by $\mathcal{K}_l$, for $l \in [K]$. We label $(t+1)$-element subsets of users in $\mathcal{K}_1$, so that the subsets with the smallest element $l$ are labelled as
\begin{equation}\label{LabelTheSubfilesK1}
    {\mathcal{S}_{\mathcal{K}_1, 1+ g_{l-1}}^{t+1}}, \dots, {\mathcal{S}_{\mathcal{K}_1,g_l}^{t+1}}, \quad \mbox{for $l=1, ..., K-t$}. 
\end{equation}
Thus, we have, for $l \in [K-t]$, 
\begin{align}\label{FamilyLabelTheSubfilesK1Kk}
    &\left\{ {\mathcal{S}_{\mathcal{K}_{1}, 1+ g_{l-1}}^{t+1}} \backslash \{ l \}, \dots, {\mathcal{S}_{\mathcal{K}_1,g_l}^{t+1}} \backslash \{ l \} \right\} = \left\{ {\mathcal{S}_{\mathcal{K}_{l+1}, 1}^{t}}, \dots, {\mathcal{S}_{\mathcal{K}_{l+1},\binom{K-l}{t}}^{t}}  \right\},
\end{align}
i.e., the family of all $(t+1)$-element subsets of $\mathcal{K}_1$ excluding $l$, which is their smallest element, is the same as the family of all $t$-element subsets of $\mathcal{K}_{l+1}$. We note that the number of subsets of users in both sets in \eqref{FamilyLabelTheSubfilesK1Kk} is $\binom{K-l}{t}$, $l \in [K-t]$. Without loss of generality, we label the subsets of users so that, for $l \in [K-t]$,
\begin{equation}\label{LabelSubsetsInsideeachGroupK1Kk}
    {\mathcal{S}_{\mathcal{K}_{1}, i+ g_{l-1}}^{t+1}} \backslash \{ l \} = {\mathcal{S}_{\mathcal{K}_{l+1}, i}^{t}}, \quad \mbox{for $i \in \left[ \binom{K-l}{t} \right]$}. 
\end{equation}

\section{Achievable Schemes}\label{Results}
Here we present three different inner bounds on $\mathcal{C} (P,M)$. 

\subsection{Joint Cache and Time-Division Coding (CTDC)}\label{CTDCMainResults}
In the following, we present an achievable rate region achieved by the CTDC scheme. With CTDC, the missing bits corresponding to the layers in $\textbf{W}^{(l)}$ are delivered in a coded manner exploiting the cached contents as in the standard coded caching framework. The coded contents are transmitted over the BC using time-division among layers. We elaborate the placement and delivery phases of the CTDC scheme in Section \ref{CTDCScheme}.

\begin{prop}\label{PropAchievableRateCTDC}
For the system described in Section \ref{SystemModel} with average power $P$ and total cache capacity $M$, the rate tuple $\left( R_1, ..., R_K \right)$ is achievable by the CTDC scheme, if there exist $t_1, \ldots, t_K$, where $t_l \in [0:K-l]$, $\forall l \in [K]$, non-negative $R^{(1)}, \ldots, R^{(K)}$, and non-negative $\lambda^{(1)}, \ldots, \lambda^{(K)}$, such that $R_k = \sum\nolimits_{l=1}^{k} R^{(l)}$, $\sum\nolimits_{l=1}^{K} \lambda^{(l)} = 1$, $\forall k \in [K]$, and
\begin{subequations}
\label{AchievableRatePropCTDC}
\begin{align}\label{AchievableRatePropCTDCRate}
R^{(l)} &\le \lambda^{(l)} \frac{\sum\limits_{i=1}^{\binom{K-l+1}{t_l}} \prod\limits_{k \in \mathcal{K}_l \backslash \mathcal{S}_{\mathcal{K}_l,i}^{t_l}} C^P_{\sigma_k^2}}{\sum\limits_{i=1}^{\binom{K-l+1}{t_l+1}} \prod\limits_{k \in \mathcal{K}_l \backslash \mathcal{S}_{\mathcal{K}_l,i}^{t_l+1}} C^P_{\sigma_k^2}}, \quad \mbox{for $l \in [K]$},\\
M &= N \sum\limits_{l=1}^{K} t_l R^{(l)}.\label{AchievableRatePropCTDCCache}
\end{align}
\end{subequations}
\end{prop}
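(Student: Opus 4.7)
The plan is to construct a separable scheme that handles each layer independently: for layer $l \in [K]$, apply a Maddah-Ali--Niesen (MAN) style coded caching construction restricted to the users $\mathcal{K}_l = [l:K]$ that request it, and stitch the $K$ deliveries together by time-dividing the $n$ channel uses in the proportions $\lambda^{(1)}, \ldots, \lambda^{(K)}$.

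For the placement phase, I would partition each layer-$l$ file $W_j^{(l)}$ into $\binom{K-l+1}{t_l}$ equal subfiles $V_{j,\mathcal{T}}^{(l)}$ indexed by $t_l$-subsets $\mathcal{T} \subseteq \mathcal{K}_l$, and store $V_{j,\mathcal{T}}^{(l)}$ at user $k$ iff $k \in \mathcal{T}$; users with $k<l$ store nothing from layer $l$. A routine count shows the layer-$l$ contribution to the total cache is $N\, t_l R^{(l)}$, yielding (\ref{AchievableRatePropCTDCCache}) after summing over $l$.

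For the delivery phase, I would dedicate $\lambda^{(l)} n$ channel uses to layer $l$, and in that block form the MAN coded messages $\bigoplus_{k \in \mathcal{S}} V_{d_k, \mathcal{S} \setminus \{k\}}^{(l)}$, one per $(t_l+1)$-subset $\mathcal{S} \subseteq \mathcal{K}_l$, each of size $nR^{(l)}/\binom{K-l+1}{t_l}$ bits and each required to be decoded by every user in $\mathcal{S}$. I would then transmit these XORs over the Gaussian BC by further time-division inside the layer-$l$ block, using capacity-achieving Gaussian codebooks of power $P$ and exploiting the ordered channel degradation to select both the rate and the time budget of each XOR so that every user in its target subset decodes its own subfile reliably at a rate tied to its individual channel capacity $C^P_{\sigma_k^2}$.

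The main technical obstacle is pinning down the specific sub-slot allocation inside each layer-$l$ block that yields exactly the ratio of sums of products on the right side of (\ref{AchievableRatePropCTDCRate}), as opposed to a weaker harmonic-mean-type bound that a naive worst-user delivery would produce. I expect to write the total layer-$l$ time as a double sum over $(t_l+1)$-subsets and their member users of sub-slot lengths, and then, after applying the bijections in (\ref{FamilyLabelTheSubfilesK1Kk})--(\ref{LabelSubsetsInsideeachGroupK1Kk}) to reindex from $(t_l+1)$-subsets of $\mathcal{K}_1$ to $t_l$-subsets of $\mathcal{K}_{l+1}$, to recognize the numerator and denominator of (\ref{AchievableRatePropCTDCRate}) as the elementary symmetric polynomials of degrees $K-l+1-t_l$ and $K-l-t_l$ in $\{C^P_{\sigma_k^2}\}_{k \in \mathcal{K}_l}$. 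Closing the proof is then standard: Gaussian codebooks at rates strictly below $C^P_{\sigma_k^2}$ decode with vanishing error as $n \to \infty$; a union bound over layers, $(t_l+1)$-subsets, and demand vectors $\mathbf{d}\in[N]^K$ ensures $P_e < \varepsilon$, and $R_k = \sum_{l=1}^k R^{(l)}$ is achievable as claimed in (\ref{AchievableRatePropCTDC}).
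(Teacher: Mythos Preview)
Your high-level decomposition (per-layer MAN caching restricted to $\mathcal{K}_l$, then time-division across layers) matches the paper, but there is a genuine gap at the placement step: you split each layer-$l$ file into \emph{equal} subfiles of size $nR^{(l)}/\binom{K-l+1}{t_l}$. With equal pieces, every user in a $(t_l+1)$-subset $\mathcal S$ has the same residual uncertainty about the XOR packet, so the sub-slot for $\mathcal S$ is dictated by $\min_{k\in\mathcal S}C^P_{\sigma_k^2}$; summing over $\mathcal S$ yields precisely the ``weaker harmonic-mean-type bound'' you yourself flag, not~(\ref{AchievableRatePropCTDCRate}). No choice of sub-slot lengths fixes this while the subfiles stay equal, because the bottleneck is the per-user decoding requirement, and those requirements are identical by construction. (As a concrete check, take $K=3$, $l=1$, $t_1=1$: your scheme gives $R^{(1)}\le 3\lambda^{(1)}C_1C_2/(C_1+2C_2)$, whereas (\ref{AchievableRatePropCTDCRate}) gives $\lambda^{(1)}(C_1C_2+C_1C_3+C_2C_3)/(C_1+C_2+C_3)$, strictly larger whenever $C_1<C_2<C_3$.)

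The missing idea is to allow \emph{unequal} subfile rates $R^{(l)}_{\mathcal T}$ in the placement. After XORing, user $k\in\mathcal S$ cancels the other summands from cache and is left to decode only its own missing piece of rate $R^{(l)}_{\mathcal S\setminus\{k\}}$, so the per-user constraint in the sub-slot of length $\lambda^{(l)}_{\mathcal S}n$ is $R^{(l)}_{\mathcal S\setminus\{k\}}\le \lambda^{(l)}_{\mathcal S}\,C^P_{\sigma_k^2}$. The paper then chooses $\lambda^{(l)}_{\mathcal S}\propto \prod_{j\in\mathcal K_l\setminus\mathcal S}C^P_{\sigma_j^2}$; with this choice every $(t_l+1)$-superset of a given $t_l$-set $\mathcal T$ imposes the \emph{same} bound $R^{(l)}_{\mathcal T}\le \lambda^{(l)}\prod_{j\in\mathcal K_l\setminus\mathcal T}C^P_{\sigma_j^2}\big/\sum_{\mathcal S}\prod_{j\in\mathcal K_l\setminus\mathcal S}C^P_{\sigma_j^2}$, and summing over $\mathcal T$ produces the numerator of~(\ref{AchievableRatePropCTDCRate}) directly. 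Two minor points: the bijections (\ref{FamilyLabelTheSubfilesK1Kk})--(\ref{LabelSubsetsInsideeachGroupK1Kk}) are specific to the CSC/CDPC pairing of layer~$1$ with layer~$l+1$ and play no role in CTDC; and the cache accounting $M=N\sum_l t_l R^{(l)}$ still holds with unequal subfiles, since each bit of layer~$l$ is stored by exactly $t_l$ users regardless of how the partition is sized.
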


\begin{corollary}\label{AchievableRateRegionCorCTDC}
The following rate region for a total cache capacity $M$ and average power $P$ can be achieved by the CTDC scheme: 
\begin{align}\label{AchievableRateRegionTheoremCTDC}
\mathcal{C}_b (P,M) =& \bigcup\limits_{\lambda^{(1)}, \dots, \lambda^{(K)}: \sum\nolimits_{l=1}^{K} \lambda^{(l)} = 1} \left( \left\{ R_{1}, \dots,R_{K}  \right\}: \mbox{$\left( R_{1}, \dots,R_{K} \right)$ and $M$ satisfy \eqref{AchievableRatePropCTDC}} \right).
\end{align}
\end{corollary}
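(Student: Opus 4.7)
The proof plan is to invoke Proposition \ref{PropAchievableRateCTDC} directly. The corollary's rate region $\mathcal{C}_b(P,M)$ is by construction the union, taken over all feasible time-sharing vectors $(\lambda^{(1)}, \dots, \lambda^{(K)})$ with $\sum_l \lambda^{(l)} = 1$, of the set of rate tuples $(R_1, \dots, R_K)$ for which one can find admissible integer parameters $t_l \in [0:K-l]$ and non-negative per-layer rates $R^{(l)}$ satisfying both \eqref{AchievableRatePropCTDCRate} and the cache constraint \eqref{AchievableRatePropCTDCCache}.

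First, I would take an arbitrary rate tuple in $\mathcal{C}_b(P,M)$ and unpack the definition of the union to extract a specific choice of $\lambda^{(l)}$, $t_l$, and $R^{(l)}$ satisfying the hypotheses of Proposition \ref{PropAchievableRateCTDC}. Second, I would apply Proposition \ref{PropAchievableRateCTDC} to conclude that the chosen rate tuple is achievable by the CTDC scheme, with the per-user cache allocation $M_k$ implicitly determined by the placement procedure described in Section \ref{CTDCScheme} (and summing to $M$ via \eqref{AchievableRatePropCTDCCache}). Since the chosen point in $\mathcal{C}_b(P,M)$ was arbitrary, the entire region is achievable.

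There is essentially no hard step: the corollary is a compact restatement of Proposition \ref{PropAchievableRateCTDC}, collecting all rate tuples attainable by varying its free parameters. The only care required is to read the inner set in \eqref{AchievableRateRegionTheoremCTDC} as implicitly quantifying $t_1, \dots, t_K$ and $R^{(1)}, \dots, R^{(K)}$ existentially via the phrase ``$(R_1, \dots, R_K)$ and $M$ satisfy \eqref{AchievableRatePropCTDC}'', and to verify that the union over $\lambda$'s preserves achievability, which holds trivially since each point inherits its achievability from the single parameter choice associated with it, without any additional memory- or time-sharing being needed.
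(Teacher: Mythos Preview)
Your proposal is correct and matches the paper's treatment: the paper gives no separate proof for Corollary~\ref{AchievableRateRegionCorCTDC}, treating it as an immediate consequence of Proposition~\ref{PropAchievableRateCTDC} obtained by ranging over all admissible parameter choices. Your reading of the implicit existential quantification over $t_l$ and $R^{(l)}$ is exactly right, and no additional argument is needed.
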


\begin{remark}\label{ConvexityCTDCRemark}
Let $(\hat{R}_{1}, \dots, \hat{R}_{K}) \in \mathcal{C}_b (P,M)$ and $(\tilde{R}_{1}, \dots, \tilde{R}_{K}) \in \mathcal{C}_b (P,M)$. Then, for any $\lambda \in [0,1]$, $(\lambda \hat{R}_{1} + \bar{\lambda} \tilde{R}_{1}, \dots, \lambda \hat{R}_{K} + \bar{\lambda} \tilde{R}_{K}) \in \mathcal{C}_b (P,M)$. This can be shown by joint time and memory-sharing. The whole library is divided into two parts according to $\lambda$, and the delivery of the two parts are carried out over two orthogonal time intervals of length $\lambda n$ and $\bar{\lambda} n$ using the codes for the two achievable tuples. Thus, for a fixed total cache capacity $M$, the rate pairs in the convex-hull of $\mathcal{C}_b (P,M)$ are achievable. 
\end{remark}

According to the convexity of the rate region $\mathcal{C}_b (P,M)$, a rate vector ${\textbf{R}}^* \buildrel \Delta \over = \left( R_{1}^*, \dots,R_{K}^*  \right)$ is on the boundary surface of $\mathcal{C}_b (P,M)$, if there exist non-negative coefficients $w_1, \dots, w_K$, $\sum\nolimits_{i=1}^{K} w_i = 1$, for which ${\textbf{R}}^*$ is a solution to the following optimization problem:
\begin{align}\label{BasicOptimization}
& \mathop {\max }\limits_{\lambda^{(1)}, \dots, \lambda^{(K)},R_1, \dots, R_K} \sum\limits_{i=1}^{K} w_i R_i, \nonumber\\
& \mbox{ subject to $\left\{ R_1, ..., R_K \right\} \in \mathcal{C}_b (P,M)$}. 
\end{align}
In the other words, for given weights $w_1, \dots, w_K$, and total cache capacity $M$, ${\textbf{R}}^*$ solves the problem in \eqref{BasicOptimization}, if $R^{(1)}, \dots, R^{(K)}$ is a solution of the following problem:
\begin{align}\label{DetailedOptimizationCTDC}
&\mathop {\max }\limits_{\lambda^{(1)}, \dots, \lambda^{(K)}, R^{(1)}, \dots, R^{(K)}} \sum\limits_{i=1}^{K} w_i \sum\limits_{l=1}^{i} R^{(l)}, \nonumber \\
& \mbox{subject to \eqref{AchievableRatePropCTDCRate} and \eqref{AchievableRatePropCTDCCache}},\nonumber\\
& \sum\limits_{l=1}^{K} \lambda^{(l)} = 1,
\end{align}
and 
\begin{align}\label{OptimizationOptimalSolutionCTDC}
R_{k}^* = \sum\limits_{l=1}^{k} R^{(l)}, \quad \mbox{for $k=1, \dots, K$}.  
\end{align}

\begin{remark}\label{CTDCRemConvex}
For given weights $w_1, \dots, w_K$, it is easy to verify that the problem in \eqref{DetailedOptimizationCTDC} is a linear optimization problem; thus it is a convex optimization problem. 
\end{remark}

\subsection{Joint Cache and Superposition Coding (CSC) and Joint Cache and Dirty Paper Coding (CDPC)}\label{CSCCDPCMainResults}
Here we present the achievable rate regions for the CSC and CDPC schemes. We introduce $r_1$ and $r_2$ to distinguish between the two, where we set $r_1=0$ and $r_2=1$ for CSC, while $r_1=1$ and $r_2=0$ for CDPC. We briefly highlight here that, with the CSC scheme, the coded packets of different layers are delivered over the Gaussian BC through superposition coding, while the CDPC scheme uses dirty paper coding to deliver the coded packets of different layers. The CSC scheme along with an example highlighting the main techniques and the CDPC scheme are elaborated in Section \ref{CSCCDPCSchemeProof}.

\begin{theorem}\label{AchievablePairsCSCSchemeTheorem}
For the system described in Section \ref{SystemModel} with average power $P$ and total cache capacity $M$, rate tuple $\left( R_1, ..., R_K \right)$ is achievable, if there exist $t \in [K-1]$, and non-negative $R^{(1)}, \ldots, R^{(K)}$, such that $R_k = \sum\nolimits_{l=1}^{k} R^{(l)}$, for $k \in [K]$, and
\begin{subequations}
\label{AchievableRateTheoremCSCCDPC}
\begin{align}\label{AchievableRateTheoremCSCCDPCLaye1}
    R^{(l)} = 
    \begin{cases} 
\sum\limits_{i=1}^{\binom{K}{t}} R^{(1)}_{\mathcal{S}_{\mathcal{K}_1,i}^{t}}, &\mbox{if $l=1$},\\
\sum\limits_{i=1}^{\binom{K-l+1}{t-1}} R^{(l)}_{\mathcal{S}_{\mathcal{K}_{l},i}^{t-1}}, &\mbox{if $l=2, ..., K-t+1$},\\
0, &\mbox{otherwise}.
\end{cases}
\end{align}
and, for $i \in \left[ 1+g_{l-1}:g_l \right]$ and $l \in [K-t]$,
\begin{align}\label{AchievableRateTheoremCSCCDPCleLayer1}
R_{\mathcal{S}_{\mathcal{K}_1,i}^{t+1} \backslash \{ k_1 \}}^{(1)} &\le \lambda_i C_{\bar{\alpha}_i P r_2+ \sigma_{k_1}^2}^{\alpha_i P}, \; \forall k_1 \in \mathcal{S}_{\mathcal{K}_1,i}^{t+1},\\
R_{\mathcal{S}_{\mathcal{K}_{l+1},i-g_{l-1}}^{t} \backslash \{ k_2 \}}^{(l+1)} &\le \lambda_i C_{{\alpha}_i P r_1+ \sigma_{k_2}^2}^{\bar{\alpha}_i P}, \; \forall k_2 \in \mathcal{S}_{\mathcal{K}_{l+1},i-g_{l-1}}^{t},\label{AchievableRateTheoremCSCCDPCleHigherLayers}
\end{align}
and 
\begin{align}\label{AchievableRateTheoremCSCCDPCCache}
M = N \left( t {R^{\left( 1 \right)}} + (t-1) \sum\limits_{l=2}^{K-t+1} {R^{\left( l \right)}} \right),
\end{align}
for some  
\begin{align}\label{AchievableRateTheoremCSCCDPCalphaiAndni}
0 \le \alpha_i &\le 1, \quad \mbox{for $i= 1, ..., \binom{K}{t+1}$},\\
0 \le \lambda_i &\le 1, \quad \mbox{for $i= 1, ..., \binom{K}{t+1}$},\label{AchievableRateTheoremCSCCDPCalphaiAndni2}\\
\label{AchievableRateTheoremCSCCDPCalphaiAndnisumni}
\sum\limits_{i=1}^{\binom{K}{t+1}} \lambda_i &= 1.
\end{align}
\end{subequations}
\end{theorem}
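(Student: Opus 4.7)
The plan is to prove Theorem~1 by constructing an explicit joint cache/channel coding scheme parameterized by $t\in[K-1]$ together with the weights $\{\alpha_i,\lambda_i\}$, and to verify the stated per-layer rate bounds by standard random coding arguments for superposition and for dirty-paper coding. First I would set up the placement phase. For layer $1$, split each file $W_n^{(1)}$ into $\binom{K}{t}$ subfiles $W_{n,S}^{(1)}$ indexed by the $t$-subsets $S$ of $\mathcal{K}_1=[1:K]$, with $W_{n,S}^{(1)}$ carrying rate $R^{(1)}_S$, and let user $k$ cache $W_{n,S}^{(1)}$ exactly when $k\in S$. For each $l\in[2:K-t+1]$, split $W_n^{(l)}$ into $\binom{K-l+1}{t-1}$ subfiles indexed by the $(t-1)$-subsets $S$ of $\mathcal{K}_l=[l:K]$, with user $k\in\mathcal{K}_l$ caching $W_{n,S}^{(l)}$ when $k\in S$; higher layers receive no cache because no coded multicasting opportunity exists for them. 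Each layer-$1$ subfile then lives in $t$ caches and each layer-$l$ subfile in $t-1$ caches, so summing across users gives the total cache usage $N\bigl(tR^{(1)}+(t-1)\sum_{l=2}^{K-t+1}R^{(l)}\bigr)=M$ in \eqref{AchievableRateTheoremCSCCDPCCache}.

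Next, I would design the coded delivery. For every $(t+1)$-subset $\mathcal{S}^{t+1}_{\mathcal{K}_1,i}$ with $i\in[1+g_{l-1}:g_l]$ and smallest element $l\in[K-t]$, the server forms two XOR messages: a layer-$1$ message $\bar{\oplus}_{k\in\mathcal{S}^{t+1}_{\mathcal{K}_1,i}}\,W^{(1)}_{d_k,\mathcal{S}^{t+1}_{\mathcal{K}_1,i}\setminus\{k\}}$ addressed to all $t+1$ users in the subset, and a layer-$(l+1)$ message $\bar{\oplus}_{k\in\mathcal{S}^{t+1}_{\mathcal{K}_1,i}\setminus\{l\}}\,W^{(l+1)}_{d_k,\mathcal{S}^{t+1}_{\mathcal{K}_1,i}\setminus\{l,k\}}$ addressed to the remaining $t$ users, which by the labelling in \eqref{LabelSubsetsInsideeachGroupK1Kk} are exactly the users in $\mathcal{S}^{t}_{\mathcal{K}_{l+1},i-g_{l-1}}\subseteq\mathcal{K}_{l+1}$. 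The standard Maddah-Ali/Niesen cache-cancellation argument guarantees that any intended user can subtract all but its own subfile from either XOR using its cache, so once both coded messages are received correctly the required layers $1,\dots,k$ of $W_{d_k}$ are reconstructed at user $k$.

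Third, I would transmit the $2\binom{K}{t+1}$ coded XORs over the Gaussian BC by time-sharing across the $(t+1)$-subsets, allocating the fraction $\lambda_i$ of channel uses to slot $i$ so that $\sum_i\lambda_i=1$. Within slot $i$, two independent Gaussian codebooks of powers $\alpha_iP$ (for the layer-$1$ XOR) and $\bar\alpha_iP$ (for the layer-$(l+1)$ XOR) are drawn. Under CSC the two codewords are simply added: user $l$, the worst user in the subset, decodes the layer-$1$ codeword treating the layer-$(l+1)$ codeword as Gaussian noise of power $\bar\alpha_iP$, and each stronger user $k_2\in\mathcal{S}^{t+1}_{\mathcal{K}_1,i}\setminus\{l\}$ (degraded relative to $l$) first decodes layer $1$, subtracts it, and then decodes the layer-$(l+1)$ codeword in the clear; these successive steps yield \eqref{AchievableRateTheoremCSCCDPCleLayer1}--\eqref{AchievableRateTheoremCSCCDPCleHigherLayers} evaluated at $r_1=0$, $r_2=1$. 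Under CDPC the roles in the encoder are reversed: the layer-$(l+1)$ codeword is sent in the clear and the layer-$1$ codeword is generated by a Gelfand--Pinsker/Costa dirty-paper encoder that treats the layer-$(l+1)$ codeword as non-causally known interference. Costa's result removes the interference from layer $1$ at every user in $\mathcal{S}^{t+1}_{\mathcal{K}_1,i}$, while each user in $\mathcal{S}^{t+1}_{\mathcal{K}_1,i}\setminus\{l\}$ decodes the layer-$(l+1)$ codeword treating the independent layer-$1$ codeword as noise of power $\alpha_iP$, giving the same inequalities with $r_1=1$, $r_2=0$.

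I would finish with a standard random coding and jointly typical decoding argument, which drives the average probability of error to zero in $n$ whenever the rates strictly satisfy the displayed inequalities and $\sum_i\lambda_i=1$, after which a union bound over demand vectors and users completes the achievability. The main obstacle I anticipate is bookkeeping: the subfile labels attached to $t$-subsets of $\mathcal{K}_1$ and $(t-1)$-subsets of $\mathcal{K}_{l+1}$ must be matched through \eqref{LabelSubsetsInsideeachGroupK1Kk} to the two-message structure of each slot, and in each slot $i$ one must identify the smallest element $l$ and verify that the layer-$(l+1)$ subset is obtained by deleting $l$ from the layer-$1$ subset so that the receiving users of the layer-$(l+1)$ XOR are exactly the stronger $t$ users of the same slot. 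Once this correspondence is pinned down, the per-slot rate bounds are immediate consequences of the degraded two-user Gaussian BC capacity (for CSC) and Costa's capacity formula for Gaussian channels with known interference (for CDPC), and time-sharing assembles the $\binom{K}{t+1}$ slots into the claimed region.
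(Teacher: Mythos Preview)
Your placement phase, the coded-delivery construction, the time-sharing across the $\binom{K}{t+1}$ slots, and the CSC analysis are essentially identical to the paper's proof and are correct (the parenthetical ``degraded relative to $l$'' should read ``less degraded than $l$,'' but this does not affect the argument).

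The CDPC part, however, has a genuine gap. A single Gelfand--Pinsker/Costa encoder fixes one scaling parameter $\tau$ in the auxiliary $Q=X_1+\tau X_2$, and Costa's interference-free rate $C^{\alpha_iP}_{\sigma_k^2}$ is attained only at the receiver whose noise variance matches the choice $\tau=\alpha_iP/(\alpha_iP+\sigma_k^2)$. Hence your sentence ``Costa's result removes the interference from layer~$1$ at every user in $\mathcal{S}^{t+1}_{\mathcal{K}_1,i}$'' is not correct as stated: for $k_1\neq l$, ordinary DPC decoding from $Y_{k_1}$ alone yields $I(Q;Y_{k_1})-I(Q;X_2)$, which is strictly below $C^{\alpha_iP}_{\sigma_{k_1}^2}$ whenever $\sigma_{k_1}^2\neq\sigma_l^2$, so \eqref{AchievableRateTheoremCSCCDPCleLayer1} with $r_2=0$ would not follow.

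The paper closes this gap by (i) tuning $\tau$ to the \emph{weakest} user $l$ in the slot, and (ii) having every stronger user $k\in\mathcal{S}^{t+1}_{\mathcal{K}_1,i}\setminus\{l\}$ decode the layer-$(l+1)$ codeword \emph{first}, treating $x_1$ as noise of power $\alpha_iP$ (this is what yields \eqref{AchievableRateTheoremCSCCDPCleHigherLayers} with $r_1=1$); with $X_2$ now available as side information at user $k$, the Gelfand--Pinsker decoder achieves
\[
I(Q;Y_k,X_2)-I(Q;X_2)=I(Q;Y_k\mid X_2)=C^{\alpha_iP}_{\sigma_k^2},
\]
which is exactly \eqref{AchievableRateTheoremCSCCDPCleLayer1} with $r_2=0$. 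Your proposal does mention that the stronger users decode layer $l{+}1$ treating layer $1$ as noise, but you never feed the decoded $X_2$ back into the layer-$1$ decoding step; once you reorder the decoding at the stronger users and invoke $I(Q;Y_k\mid X_2)$ rather than Costa's single-user formula, the remainder of your plan goes through exactly as in the paper.
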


\begin{corollary}\label{AchievableRateRegionCor}
The following rate region for a total cache capacity $M$ and average power constraint $P$ can be achieved: 
\begin{align}\label{AchievableRateRegionTheoremCSCCDPC}
\mathcal{C}_c (P,M) = & \bigcup\limits_{\boldsymbol{\alpha},\boldsymbol{\lambda}:\sum\limits_{i=1}^{\binom{K}{t+1}} \lambda_i = 1} \left( \left\{ R_{1}, \dots,R_{K}  \right\}: \mbox{$\left( R_{1}, \dots,R_{K} \right)$ and $M$ satisfy \eqref{AchievableRateTheoremCSCCDPC}} \right),
\end{align}
where $\boldsymbol{\alpha} \buildrel \Delta \over = \alpha_1, \dots, \alpha_{\binom{K}{t+1}}$, and $\boldsymbol{\lambda} \buildrel \Delta \over = \lambda_1, \dots, \lambda_{\binom{K}{t+1}}$. 
\end{corollary}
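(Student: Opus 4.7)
The plan is to observe that Corollary \ref{AchievableRateRegionCor} is essentially a packaging of Theorem \ref{AchievablePairsCSCSchemeTheorem} into a single set-valued object. First I would take any rate tuple $(R_1, \ldots, R_K) \in \mathcal{C}_c(P,M)$, unwrap the definition in \eqref{AchievableRateRegionTheoremCSCCDPC}, and extract a choice of $t$, $\boldsymbol{\alpha}$, $\boldsymbol{\lambda}$ together with layer rates $R^{(1)}, \ldots, R^{(K)}$ satisfying \eqref{AchievableRateTheoremCSCCDPCLaye1}--\eqref{AchievableRateTheoremCSCCDPCalphaiAndnisumni}. Invoking Theorem \ref{AchievablePairsCSCSchemeTheorem} with exactly these parameters supplies, for every $\varepsilon>0$, an $(n, R^{(1)}, \ldots, R^{(K)}, M_1, \ldots, M_K)$ code with $P_e<\varepsilon$ and $\sum_k M_k \le M$, so the tuple is achievable by definition. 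Since this holds for every element of the union, the union itself is achievable.

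Next I would address convexity. I would reproduce the joint time-and-memory-sharing argument of Remark \ref{ConvexityCTDCRemark} in the CSC/CDPC setting: for any two tuples $(\hat R_1, \ldots, \hat R_K), (\tilde R_1, \ldots, \tilde R_K) \in \mathcal{C}_c(P,M)$ with associated parameter choices and codes, split the library $\mathbf{W}$ into two sub-libraries of relative sizes $\lambda$ and $\bar\lambda$, partition each user's cache in the same proportion (so that $\sum_k M_k \le M$ is preserved), and concatenate the corresponding delivery codewords over two time blocks of lengths $\lambda n$ and $\bar\lambda n$. The resulting scheme achieves $(\lambda \hat R_k + \bar\lambda \tilde R_k)_{k=1}^K$, which establishes convexity of $\mathcal{C}_c(P,M)$ and justifies treating it as a rate region rather than a collection of isolated points.

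The main (and essentially only) conceptual check is to make sure that the choice of $t \in [K-1]$, which appears as a free integer in Theorem \ref{AchievablePairsCSCSchemeTheorem} but is suppressed in the statement of the corollary, is implicitly absorbed into the existential quantifier defining the set inside the union in \eqref{AchievableRateRegionTheoremCSCCDPC}; I would state this explicitly. A small additional remark handles the closure: since achievability is defined up to arbitrary $\varepsilon>0$, any limit point of achievable tuples is itself achievable, so we may take the closure of the union without losing anything.

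I do not anticipate a real obstacle here: all of the coding effort—careful splitting of files into $\binom{K}{t+1}$ or $\binom{K-l+1}{t-1}$ coded pieces, power allocation via $\alpha_i$, time-sharing via $\lambda_i$, and the superposition or dirty-paper transmission itself—is carried out inside the proof of Theorem \ref{AchievablePairsCSCSchemeTheorem}. The corollary only collates those results. If anything, the mildest nuisance is keeping the bookkeeping of parameters $(t, \boldsymbol{\alpha}, \boldsymbol{\lambda}, R^{(l)})$ consistent across the unwrapping and memory-sharing steps, which is routine.
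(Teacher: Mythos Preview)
Your proposal is correct and matches the paper's treatment: the corollary is stated without a separate proof, as it is simply the region-level restatement of Theorem~\ref{AchievablePairsCSCSchemeTheorem}, and the paper notes the convexity of $\mathcal{C}_c(P,M)$ via the same joint time-and-memory-sharing argument you describe (see the paragraph following the corollary). Your additional remarks on the implicit role of $t$ and on closure are more explicit than the paper but entirely consistent with it.
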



For a fixed total cache capacity $M$, the convexity of region $\mathcal{C}_c (P,M)$ is followed through the same argument as Remark \ref{ConvexityCTDCRemark}, for both the CSC and CDPC schemes. As a result, for a given total cache capacity $M$, and for given non-negative coefficients $w_1, \dots, w_K$, such that $\sum\nolimits_{i=1}^{K} w_i = 1$, a rate vector ${\textbf{R}}^*$ is on the boundary surface of the achievable rate region $\mathcal{C}_c (P,M)$, if $R^{(1)}, \dots, R^{(K)}$ is a solution of the following problem:
\begin{subequations}
\label{VectorRsuperlDef}
\begin{align}\label{DetailedOptimization}
&\mathop {\max }\limits_{\boldsymbol{\alpha},\boldsymbol{\lambda},{\textbf{R}}^{(1)}, \dots, {\textbf{R}}^{(K-t+1)}} \sum\nolimits_{i=1}^{K} w_i \sum\nolimits_{l=1}^{i} R^{(l)}, \nonumber\\
& \mbox{subject to $R^{(1)}, \dots, R^{(K-t+1)}$ satisfy \eqref{AchievableRateTheoremCSCCDPCLaye1}},\nonumber \\
& \qquad \qquad \; \mbox{${\textbf{R}}^{(1)}$ satisfy \eqref{AchievableRateTheoremCSCCDPCleLayer1}},\nonumber \\
& \qquad \qquad \; \mbox{${\textbf{R}}^{(2)},\dots,{\textbf{R}}^{(K-t+1)}$ satisfy \eqref{AchievableRateTheoremCSCCDPCleHigherLayers}},\nonumber \\
& \qquad \qquad \; \mbox{$M$ satisfies \eqref{AchievableRateTheoremCSCCDPCCache}},\nonumber\\
& \qquad \qquad \; \mbox{$\boldsymbol{\alpha}$ and $\boldsymbol{\lambda}$ satisfy \eqref{AchievableRateTheoremCSCCDPCalphaiAndni}-\eqref{AchievableRateTheoremCSCCDPCalphaiAndnisumni}},
\end{align}
where 
\begin{align}\label{VectorRsuperlDef1}
{\textbf{R}}^{(1)}& \buildrel \Delta \over = R^{(1)}_{\mathcal{S}_{\mathcal{K}_1,1}^{t}}, \dots, R^{(1)}_{\mathcal{S}_{\mathcal{K}_1,\binom{K}{t}}^{t}},\\
{\textbf{R}}^{(l)}& \buildrel \Delta \over = R^{(l)}_{\mathcal{S}_{\mathcal{K}_l,1}^{t}}, \dots, R^{(l)}_{\mathcal{S}_{\mathcal{K}_l,\binom{K-l+1}{t-1}}^{t}}, \; \mbox{for $l \in [2:K-t+1]$},  
\end{align}
\end{subequations}
and 
\begin{align}\label{OptimizationOptimalSolution}
R_{k}^* = \sum\limits_{l=1}^{k} R^{(l)}, \quad \mbox{for $k=1, \dots, K$}.  
\end{align}

\begin{remark}\label{TimeSharingRemark}
Let $\tilde{\emph{\textbf{R}}} \buildrel \Delta \over = ( \tilde{R}_{1}, \dots,\tilde{R}_{K}  )$ and $\hat{\emph{\textbf{R}}} \buildrel \Delta \over = ( \hat{R}_{1}, \dots,\hat{R}_{K}  )$ be two achievable rate tuples for total cache capacities $\tilde{M}$ and $\hat{M}$, respectively. Then, $\beta \tilde{\textbf{R}} + \bar{\beta}\hat{\textbf{R}}$ can be achieved through joint time and memory-sharing for a total cache capacity $\beta \tilde{M} + \bar{\beta}\hat{M}$, for some $\beta \in [0,1]$. For $M=0$, the system under consideration is equivalent to the Gaussian BC without user caches, where user $k$ requests a file of rate $\sum\nolimits_{l = 1}^k R^{(l)}$, $k \in [K]$, and rate tuple $\emph{\textbf{R}}_z \buildrel \Delta \over = ( {R}_{z_1}, ..., {R}_{z_K} )$ is achievable by superposition coding, where
\begin{equation}\label{AchievableRatesZeroCacheCapacity}
    {R}_{z_k} = C_{\sum\limits_{i=k+1}^{K} {\gamma_i P} + \sigma_k^2}^{\gamma_kP}, \quad \mbox{for $k=1, ..., K$},
\end{equation}
for some non-negative coefficients $\gamma_1, \dots, \gamma_K$, such that $\sum\nolimits_{i=1}^{K} \gamma_i =1$. Hence, rate tuples $\beta \emph{\textbf{R}}_z + \bar{\beta}\tilde{\emph{\textbf{R}}}$ and $\beta \emph{\textbf{R}}_z + \bar{\beta}\hat{\emph{\textbf{R}}}$ are also achievable for total cache capacities $\bar{\beta} \tilde{M}$ and $\bar{\beta} \hat{M}$, respectively, through time sharing. 
\end{remark}

\section{Proof of Proposition \ref{PropAchievableRateCTDC}}\label{CTDCScheme}
With the DTM scheme, the layers with $\textbf{W}^{(l)}$, for $l \in [K]$, are cached and delivered via a distinct time slot (TS). We elaborate the placement and delivery phases of the CTDC scheme in the following. 

\underline{\textbf{Placement phase:}} The layers with $\textbf{W}^{(l)}$ are cached partially by the users in $\mathcal{K}_l$ constrained by their cache capacities, for $l \in [K]$. For caching factors $t_1, \ldots, t_K$, where $t_l \in [0:K-l]$, layer $W^{(l)}_j$ is divided into $\binom{K-l+1}{t_l}$ disjoint subfiles $W^{(l)}_{j,\mathcal{S}^{t_l}_{\mathcal{K}_l,1}}$, $\dots$, $W^{(l)}_{j,\mathcal{S}^{t_l}_{\mathcal{K}_l,\binom{K-l+1}{t_l}}}$, where subfile $W^{(l)}_{j,\mathcal{S}^{t_l}_{\mathcal{K}_l,i}}$ is of rate $R^{(l)}_{\mathcal{S}^{t_l}_{\mathcal{K}_l,i}}$, for $i \in \left[ \binom{K-l+1}{t_l} \right]$, $l \in [K]$, $j \in [N]$.\footnote{We assume throughout the paper that, for any real number $A \ge 0$, $2^{nA}$ is an integer for $n$ large enough.} We note that
\begin{align}\label{SumRateSubfilesCTDC}
R^{(l)} = \sum\limits_{i=1}^{\binom{K-l+1}{t_l}} R^{(l)}_{\mathcal{S}^{t_l}_{\mathcal{K}_l,i}}, \quad \mbox{for $l \in [K]$}.
\end{align}
User $k$'s cache content, for $k \in [K]$, is given by 
\begin{equation}\label{UserkCacheContentCTDC}
    U_k = \bigcup\limits_{j = 1}^{N} \bigcup\limits_{l = 1}^{k} \bigcup\limits_{i \in \left[ \binom{K-l+1}{t_l} \right] : k \in \mathcal{S}^{t_l}_{\mathcal{K}_l,i}} {W_{j,\mathcal{S}_{\mathcal{K}_l,i}^{t_l}}^{(l)}}, 
\end{equation}
which leads to a total cache capacity of 
\begin{equation}\label{TotalCacheCapacityCTDC}
    M = \sum\limits_{k=1}^{K} M_k = N \sum\limits_{l=1}^{K} t_l R^{(l)}. 
\end{equation}

\underline{\textbf{Delivery phase:}} Given a demand vector $\textbf{d} = \left( d_1, ..., d_K \right)$, the server aims to deliver the coded packet
\begin{equation}\label{CodedPacketDefCTDC}
W^{(l)}_{\mathcal{S}_{\mathcal{K}_l,i}^{t_l+1}} \buildrel \Delta \over = {\overline{\bigoplus}} _{k \in \mathcal{S}_{\mathcal{K}_l,i}^{t_l+1}} W^{(l)}_{d_k,\mathcal{S}_{\mathcal{K}_l,i}^{t_l+1} \backslash \{k\}}
\end{equation}
of rate
\begin{equation}\label{CodedPacketRateDefCTDC}
R^{(l)}_{{\rm{XOR}},\mathcal{S}_{\mathcal{K}_l,i}^{t_l+1}} \buildrel \Delta \over = \mathop {\max }\limits_{k \in \mathcal{S}_{\mathcal{K}_l,i}^{t_l+1}} \left\{ R^{(l)}_{\mathcal{S}_{\mathcal{K}_l,i}^{t_l+1} \backslash \{ k\}} \right\} 
\end{equation}
to the users in $\mathcal{S}_{\mathcal{K}_l,i}^{t_l+1}$, for $i \in \left[ \binom{K-l+1}{t_l+1} \right]$ and $l \in [K]$. Each user $k \in \mathcal{K}_l$ can obtain all missing bits of its request $W^{(l)}_{d_k}$ after receiving 
\begin{equation}\label{ObtainingAllMissingBitslayerlCTDC}
\bigcup\limits_{i \in \left[ \binom{K-l+1}{t_l+1} \right]:k \in \mathcal{S}_{\mathcal{K}_l,i}^{t_l+1}} W^{(l)}_{\mathcal{S}_{\mathcal{K}_l,i}^{t_l+1}}      
\end{equation}
along with its cache content, for $l \in [K]$. We allocate a distinct $\lambda^{(l)}n$ channel uses to deliver the coded packets $W^{(l)}_{\mathcal{S}_{\mathcal{K}_l,1}^{t_l+1}}$,$\dots$,$W^{(l)}_{\mathcal{S}_{\mathcal{K}_l,\binom{K-l+1}{t_l+1}}^{t_l+1}}$ to the intended users in $\mathcal{K}_l$, for some $\lambda^{(l)} \in [0,1]$, $l \in [K]$, where each coded packet among them is delivered via a different TS, and we have $\sum\nolimits_{l=1}^{K} \lambda^{(l)} = 1$. The coded packet $W^{(l)}_{\mathcal{S}_{\mathcal{K}_l,i}^{t_l+1}}$ of the files in the $l$-th layer is delivered to the users $\mathcal{S}_{\mathcal{K}_l,i}^{t_l+1}$ through a distinct time interval of length $\lambda^{(l)}_i n$, for $i \in \left[ \binom{K-l+1}{t_l+1} \right]$ and $l \in [K]$, where $\sum\nolimits_{i=1}^{\binom{K-l+1}{t_l+1}} \lambda^{(l)}_i = \lambda^{(l)}$. In order to recover the coded packet $W^{(l)}_{\mathcal{S}_{\mathcal{K}_l,i}^{t_l+1}}$, user ${k_1} \in \mathcal{S}_{\mathcal{K}_l,i}^{t_l+1}$ first generates 
\begin{equation}\label{CTDCRecoverUsersFirstLayer}
    {\overline{\bigoplus}} _{k \in \mathcal{S}_{\mathcal{K}_l,i}^{t_l+1}\backslash \{ {k_1}\}} W^{(l)}_{d_k,\mathcal{S}_{\mathcal{K}_l,i}^{t_l+1} \backslash \{k\}}
\end{equation}
from its cache; it then only needs to decode $W^{(l)}_{d_{{k_1}},\mathcal{S}_{\mathcal{K}_l,i}^{t_l+1} \backslash \{{k_1}\}}$ of rate $R^{(l)}_{\mathcal{S}_{\mathcal{K}_l,i}^{t_l+1} \backslash \{k_1\}}$, which the decoding is successful for $n$ large enough, if
\begin{align}\label{CTDCRecoverUsersFirstLayerDecoding}
    R^{(l)}_{\mathcal{S}_{\mathcal{K}_l,i}^{t_l+1} \backslash \{{k_1}\}} \le \lambda^{(l)}_i C^{P}_{\sigma_{{k_1}}^2}, \quad &\mbox{for $i \in \left[ \binom{K-l+1}{t_l+1} \right]$ and $l \in [K]$}. 
\end{align}
By choosing 
\begin{align}\label{CTDCRecoverUsersnilDecoding}
    \lambda_i^{(l)} = \frac{\prod\limits_{k \in \mathcal{K}_l \backslash \mathcal{S}_{\mathcal{K}_l,i}^{t_l+1}} C^P_{\sigma_k^2}}{\sum\limits_{i=1}^{\binom{K-l+1}{t_l+1}} \prod\limits_{k \in \mathcal{K}_l \backslash \mathcal{S}_{\mathcal{K}_l,i}^{t_l+1}} C^P_{\sigma_k^2}} \lambda^{(l)}, \quad &\mbox{for $i \in \left[ \binom{K-l+1}{t_l+1} \right]$ and $l \in [K]$}, 
\end{align}
which satisfies $\sum\nolimits_{i=1}^{\binom{K-l+1}{t_l+1}} \lambda^{(l)}_i = \lambda^{(l)}$ and leads to
\begin{align}\label{CTDCRecoverUsersRilDecoding}
    R^{(l)}_{\mathcal{S}_{\mathcal{K}_l,i}^{t_l}} \le \lambda^{(l)} \frac{\prod\limits_{k \in \mathcal{K}_l \backslash \mathcal{S}_{\mathcal{K}_l,i}^{t_l}} C^P_{\sigma_k^2}}{\sum\limits_{i=1}^{\binom{K-l+1}{t_l+1}} \prod\limits_{k \in \mathcal{K}_l \backslash \mathcal{S}_{\mathcal{K}_l,i}^{t_l+1}} C^P_{\sigma_k^2}}, 
\end{align}
it can be checked that all the conditions in \eqref{CTDCRecoverUsersFirstLayerDecoding} are satisfied. Therefore, the coded packets $W^{(l)}_{\mathcal{S}_{\mathcal{K}_l,1}^{t_l+1}}$,$\dots$,$W^{(l)}_{\mathcal{S}_{\mathcal{K}_l,\binom{K-l+1}{t_l+1}}^{t_l+1}}$, each delivered with an average power $P$ via a distinct TS, can be decoded by their intended users successfully, if, for $n$ large enough,
\begin{align}\label{layerlDecodeSuccessfulCTDC}
R^{(l)} \le \lambda^{(l)} \frac{\sum\limits_{i=1}^{\binom{K-l+1}{t_l}} \prod\limits_{k \in \mathcal{K}_l \backslash \mathcal{S}_{\mathcal{K}_l,i}^{t_l}} C^P_{\sigma_k^2}}{\sum\limits_{i=1}^{\binom{K-l+1}{t_l+1}} \prod\limits_{k \in \mathcal{K}_l \backslash \mathcal{S}_{\mathcal{K}_l,i}^{t_l+1}} C^P_{\sigma_k^2}}, \quad \mbox{for $l \in [K]$},     
\end{align}
which together with the total cache capacity given in \eqref{TotalCacheCapacityCTDC} complete the proof of Proposition \ref{PropAchievableRateCTDC}.

We remark here that the CTDC scheme applies the scheme in \cite{ShirinWiggerYenerCacheAssingment}, proposed when the messages are of the same rate, to the scenario of multiple-layer messages through joint time and memory-sharing.

\section{Proof of Theorem \ref{AchievablePairsCSCSchemeTheorem}}\label{CSCCDPCSchemeProof}
Here we present the CSC and CDPC schemes, which achieve the rate tuple presented in Theorem \ref{AchievablePairsCSCSchemeTheorem} for $r_1=0,r_2=1$, and $r_1=1,r_2=0$, respectively, and $t \in [K-1]$.       

\underline{\textbf{Placement phase:}} As described in Section \ref{SystemModel}, user $k$ receives layers 1 to $k$ of its request, i.e., $W_{d_k}^{(1)}, \dots, W_{d_k}^{(k)}$, for $k \in [K]$. Thus, the $l$-th layer of the files, i.e., $\textbf{W}^{(l)}$, are cached partially by the users in $\mathcal{K}_l$ constrained by their cache capacities, for $l \in [K]$. For $t \in [K-1]$, we set
\begin{align}\label{DeftlayersProposedScheme}
    t_l = \begin{cases} 
t, &\mbox{if $l=1$},\\
t-1, &\mbox{if $2 \le l \le K-t+1$},\\
0 &\mbox{otherwise},
\end{cases}
\end{align}
and 
\begin{align}\label{CSCCDPCRateLyaerlZero}
R^{(l)} = 0, \quad \mbox{for $l \in \left[ K-t+2:K \right]$}.  
\end{align}
The $l$-th layer of each file, i.e., each layer with $\textbf{W}^{(l)}$, which are targeted for users in $\mathcal{K}_l$, is split into $\binom{K-l+1}{t_l}$ disjoint subfiles, for $l \in [K-t+1]$, represented by      
\begin{equation}\label{HighLevellayerSubfiles}
    W_{j}^{(l)} = \bigcup\limits_{i = 1}^{\binom{K-l+1}{t_l}} {W_{j,\mathcal{S}_{\mathcal{K}_l,i}^{t_l}}^{(l)}}, \quad \mbox{for $j \in [N]$},
\end{equation}
where subfile ${W_{j,\mathcal{S}_{\mathcal{K}_l,i}^{t_l}}^{(l)}}$ is of rate $R_{\mathcal{S}_{\mathcal{K}_l,i}^{t_l}}^{(l)}$, for $i \in \left[ \binom{K-l+1}{t_l} \right]$. We note that $\sum\nolimits_{i=1}^{\binom{K-l+1}{t_l}} {R_{\mathcal{S}_{\mathcal{K}_l,i}^{t_l}}^{(l)}} = R^{(l)}$, for $l \in [K-t+1]$. User $k$'s cache content, $k \in [K]$, is given by
\begin{equation}\label{UserkCacheContent}
    U_k = \bigcup\limits_{j = 1}^{N} \bigcup\limits_{l = 1}^{k} \bigcup\limits_{i \in \left[ \binom{K-l+1}{t_l} \right] : k \in \mathcal{S}^{t_l}_{\mathcal{K}_l,i}} {W_{j,\mathcal{S}_{\mathcal{K}_l,i}^{t_l}}^{(l)}}, 
\end{equation}
leading to the cache capacity
\begin{equation}\label{UserkCacheCapacity}
    M_k = N \sum\limits_{l = 1}^{k} \sum\limits_{i \in \left[ \binom{K-l+1}{t_l} \right] : k \in \mathcal{S}^{t_l}_{\mathcal{K}_l,i}} {R_{\mathcal{S}_{\mathcal{K}_l,i}^{t_l}}^{(l)}}. 
\end{equation}
We can obtain the total cache capacity in the system as
\begin{equation}\label{TotalCacheCapacity}
    M = \sum\limits_{k=1}^{K} M_k = N \sum\limits_{l=1}^{K-t+1} t_lR^{(l)} = N \left( t {R^{\left( 1 \right)}} + (t-1) \sum\limits_{l=2}^{K-t+1} {R^{\left( l \right)}} \right), 
\end{equation}
which is equal to the one in \eqref{AchievableRateTheoremCSCCDPCCache}. We note that the rate of the $l$-th layer of each file, i.e., $R^{(l)}$, for $l \in [K]$, corresponds to \eqref{AchievableRateTheoremCSCCDPCLaye1}.

\underline{\textbf{Delivery phase:}} Given a demand vector $\textbf{d} = \left( d_1, ..., d_K \right)$, the server delivers the coded packet 
\begin{equation}\label{CodedPacketDef}
W^{(l)}_{\mathcal{S}_{\mathcal{K}_l,i}^{t_l+1}} = {\overline{\bigoplus}} _{k \in \mathcal{S}_{\mathcal{K}_l,i}^{t_l+1}} W^{(l)}_{d_k,\mathcal{S}_{\mathcal{K}_l,i}^{t_l+1} \backslash \{k\}}, \; \mbox{for $i \in \left[ \binom{K-l+1}{t_l+1}\right]$}, 
\end{equation}
of rate 
\begin{equation}\label{CodedPacketRateDef}
R^{(l)}_{{\rm{XOR}},\mathcal{S}_{\mathcal{K}_l,i}^{t_l+1}} = \mathop {\max }\limits_{k \in \mathcal{S}_{\mathcal{K}_l,i}^{t_l+1}} \left\{ R^{(l)}_{\mathcal{S}_{\mathcal{K}_l,i}^{t_l+1} \backslash \{ k\}} \right\} 
\end{equation}
to the users in $\mathcal{S}_{\mathcal{K}_l,i}^{t_l}$, for $l \in [K-t+1]$. Thus, after receiving $W^{(l)}_{\mathcal{S}_{\mathcal{K}_l,i}^{t_l+1}}$, each user $k \in {\mathcal{S}_{\mathcal{K}_l,i}^{t_l+1}}$ can recover the missing subfile $W^{(l)}_{d_k,\mathcal{S}_{\mathcal{K}_l,i}^{t_l+1} \backslash \{k\}}$ of the $l$-th layer of its request, for $i \in \left[ \binom{K-l+1}{t_l+1}\right]$ and $l \in [K-t+1]$. We note that, user $k$, for $k \in [K]$, only exists in the sets $\mathcal{K}_1, \dots, \mathcal{K}_k$, and also the rate of each layer with $\textbf{W}^{(K-t+2)}, \dots, \textbf{W}^{(K)}$ is set to zero. Thus, user $k$, for $k \in [K-t+1]$, can recover all missing subfiles of layers $W_{d_k}^{(1)}, \dots, W_{d_k}^{(k)}$ after receiving all the coded packets $W^{(l)}_{\mathcal{S}_{\mathcal{K}_l,i}^{t_l+1}}$, $\forall i \in \left[ \binom{K-l+1}{t_l+1}\right]$ and $\forall l \in [k]$, such that $k \in \mathcal{S}_{\mathcal{K}_l,i}^{t_l+1}$. On the other hand, user $k$, for $k \in [K-t+2:K]$, can recover the missing bits of all the layers $W_{d_k}^{(1)}, \dots, W_{d_k}^{(K-t+1)}$ after receiving $W^{(l)}_{\mathcal{S}_{\mathcal{K}_l,i}^{t_l+1}}$, $\forall i \in \left[ \binom{K-l+1}{t_l+1}\right]$ and $\forall l \in [K-t+1]$, such that $k \in \mathcal{S}_{\mathcal{K}_l,i}^{t_l+1}$. We remind here that $t_l$ is given in \eqref{DeftlayersProposedScheme}.

The main technique to deliver the coded packets is to send the packet targeted to the users in ${\mathcal{S}_{\mathcal{K}_{l+1}, i}^{t}}$ along with the packet targeted to the users in ${\mathcal{S}_{\mathcal{K}_{1}, i+ g_{l-1}}^{t+1}}$ through different channel coding techniques, where, from \eqref{LabelSubsetsInsideeachGroupK1Kk}, ${\mathcal{S}_{\mathcal{K}_{l+1}, i}^{t}} = {\mathcal{S}_{\mathcal{K}_{1}, i+ g_{l-1}}^{t+1}} \backslash \{ l \}$, for $i \in \left[ \binom{K-l}{t} \right]$ and $l \in [K-t]$. For this purpose, the transmission is performed via $\binom{K}{t+1}$ orthogonal TSs, where the $i$-th TS is of length $\lambda_i n$ channel uses, for $i \in \left[ \binom{K}{t+1} \right]$, so that $\sum\nolimits_{i=1}^{\binom{K}{t+1}} \lambda_i = 1$.

\subsection{Joint Cache and Superposition Coding (CSC) Scheme}\label{CSCScheme}
With TS $i$, for $i \in \left[1+g_{l-1}:g_{l}\right]$ and $l \in [K-t]$, we generate two subcodebooks
\begin{subequations}\label{CSCCodebooksDef}
\begin{align}\label{CSCCodebooksDef1}
\mathcal{C}_1& \buildrel \Delta \over= \left\{ x_1^{\lambda_{i}n} \left( w_{1} \right): w_{1} \in \left[ 2^{n R^{(1)}_{{\rm{XOR}},\mathcal{S}_{\mathcal{K}_1,i}^{t+1}}} \right] \right\},\\
\mathcal{C}_2& \buildrel \Delta \over= \left\{ x_2^{\lambda_{i}n} \left( w_{2} \right): w_{2} \in \left[ 2^{n R^{(l+1)}_{{\rm{XOR}},\mathcal{S}_{\mathcal{K}_{l+1},i-g_{l-1}}^{t}}} \right] \right\},\label{CSCCodebooksDef2}
\end{align}
\end{subequations}
where all the entries in $\mathcal{C}_1$ and $\mathcal{C}_2$ are drawn i.i.d. according to $\mathcal{N} \left( 0,\alpha_i P \right)$ and $\mathcal{N} \left( 0,\bar{\alpha}_i P \right)$, respectively, for some $0 \le \alpha_i \le 1$. The server then transmits the codeword
\begin{align}\label{CSCServerTransmittedSignal}
    x_1^{\lambda_{i}n} \left( W_{\mathcal{S}_{\mathcal{K}_1,i}^{t+1}}^{(1)} \right) + x_2^{\lambda_{i}n} \left( W_{\mathcal{S}_{\mathcal{K}_{l+1},i-g_{l-1}}^{t}}^{(l+1)} \right),
\end{align}
sent through linear superposition of the codewords from subcodebooks $\mathcal{C}_1$ and $\mathcal{C}_2$, over the Gaussian BC with TS $i$, for $i \in \left[1+g_{l-1}:g_{l}\right]$, $l \in [K-t]$. We note that $g_{K-t} = \sum\nolimits_{j=1}^{K-t} \binom{K-j}{t} = \binom{K}{t+1}$. We also note that, if all the coded packets $W_{\mathcal{S}_{\mathcal{K}_{l+1},i-g_{l-1}}^{t}}^{(l+1)}$ are received by their targeted users successfully via all TSs $i$, $\forall i \in \left[1+g_{l-1}:g_{l}\right]$, then the users in $\mathcal{K}_{l+1}$ can obtain the missing subfiles of the $(l+1)$-th layer of their demands, for $l \in [K-t]$. On the other hand, the users in $\mathcal{K}_{1}$ need to receive all coded packets $W_{\mathcal{S}_{\mathcal{K}_1,i}^{t+1}}^{(1)}$ targeted for them via all $\binom{K}{t+1}$ TSs to obtain the first layer of their requests. The users in $\mathcal{S}_{\mathcal{K}_1,i}^{t+1}$ first decode the message with $x_1^{\lambda_in}$, while considering $x_2^{\lambda_in}$ as noise. To decode the message with $x_1^{\lambda_in}$, each user $k_1 \in \mathcal{S}_{\mathcal{K}_1,i}^{t+1}$ first recovers 
\begin{equation}\label{CSCRecoverUsersFirstLayer}
    {\overline{\bigoplus}} _{k \in \mathcal{S}_{\mathcal{K}_1,i}^{t+1}\backslash \{ k_1\}} W^{(1)}_{d_k,\mathcal{S}_{\mathcal{K}_1,i}^{t+1} \backslash \{k\}}
\end{equation}
from its cache, and it only needs to decode $W^{(1)}_{d_{k_1},\mathcal{S}_{\mathcal{K}_1,i}^{t+1} \backslash \{k_1\}}$ of rate $R^{(1)}_{\mathcal{S}_{\mathcal{K}_1,i}^{t+1} \backslash \{k_1\}}$, which, using an optimal decoding, the decoding is successful for $n$ large enough, if
\begin{align}\label{CSCRecoverUsersFirstLayerDecoding}
    R^{(1)}_{\mathcal{S}_{\mathcal{K}_1,i}^{t+1} \backslash \{k_1\}} \le \lambda_i C^{\alpha_iP}_{\bar{\alpha}_i P + \sigma_{k_1}^2}, \quad &\mbox{for $i \in \left[1+g_{l-1}:g_{l}\right]$ and $l \in [K-t]$}. 
\end{align}
We note, from \eqref{LabelSubsetsInsideeachGroupK1Kk}, that 
\begin{align}\label{LabelSubsetsInsideeachGroupK1KkCSC}
    {\mathcal{S}_{\mathcal{K}_{l+1}, i-g_{l-1}}^{t}} = {\mathcal{S}_{\mathcal{K}_{1}, i}^{t+1}} \backslash \{ l \} , \quad \mbox{for $i \in \left[1+g_{l-1}:g_{l}\right]$}; 
\end{align}
thus, each user in ${\mathcal{S}_{\mathcal{K}_{l+1}, i-g_{l-1}}^{t}}$, for which the message with codeword $x_2^{\lambda_i n}$ is targeted to, can decode $x_1^{\lambda_i n}$ having \eqref{CSCRecoverUsersFirstLayerDecoding} satisfied, for $l \in [K-t]$. Similarly, to decode the message with $x_2^{\lambda_in}$, each user $k_2 \in \mathcal{S}_{\mathcal{K}_{l+1},i-g_{l-1}}^{t}$ first recovers 
\begin{equation}\label{CSCRecoverUsersHigherLayer}
    {\overline{\bigoplus}} _{k \in \mathcal{S}_{\mathcal{K}_{l+1},i-g_{l-1}}^{t}\backslash \{ k_2\}} W^{(l+1)}_{d_k,\mathcal{S}_{\mathcal{K}_{l+1},i-g_{l-1}}^{t} \backslash \{k\}}
\end{equation}
from its cache, and it only needs to decode $W^{(l+1)}_{d_{k_2},\mathcal{S}_{\mathcal{K}_{l+1},i-g_{l-1}}^{t} \backslash \{k_2\}}$ of rate $R^{(l+1)}_{\mathcal{S}_{\mathcal{K}_{l+1},i-g_{l-1}}^{t} \backslash \{k_2\}}$, which, using an optimal decoding, the decoding is successful for $n$ large enough, if
\begin{align}\label{CSCRecoverUsersHigherLayerDecoding}
    R^{(l+1)}_{\mathcal{S}_{\mathcal{K}_{l+1},i-g_{l-1}}^{t} \backslash \{k_2\}} \le \lambda_i C^{\bar{\alpha}_iP}_{\sigma_{k_2}^2}, \quad &\mbox{for $i \in \left[1+g_{l-1}:g_{l}\right]$ and $l \in [K-t]$}. 
\end{align}

Observe that the conditions in \eqref{CSCRecoverUsersFirstLayerDecoding} and \eqref{CSCRecoverUsersHigherLayerDecoding} prove the achievability of the rate tuple outlined in Theorem \ref{AchievablePairsCSCSchemeTheorem} for the total cache capacity $M$ given in \eqref{TotalCacheCapacity}, which is the same as the one in \eqref{AchievableRateTheoremCSCCDPCCache}, when $r_1=0$ and $r_2=1$.   

In the following, we present an example of the CSC scheme for more clarification.

\begin{table}[!t]
\caption{Codewords sent via 4 TSs in the delivery phase.}
\centering
\begin{tabular}{ |c|c| }
\hline
TS number & Transmitted codeword\\
\hline
1 & $x_1^{\lambda_{1}n} \left( W_{\{1,2,3 \}}^{(1)} \right) + x_2^{\lambda_{1}n} \left( W_{\{ 2,3 \}}^{(2)} \right)$\\
\hline
2 & $x_1^{\lambda_{2}n} \left( W_{\{1,2,4 \}}^{(1)} \right) + x_2^{\lambda_{2}n} \left( W_{\{ 2,4 \}}^{(2)} \right)$\\
\hline
3 & $x_1^{\lambda_{3}n} \left( W_{\{1,3,4 \}}^{(1)} \right) + x_2^{\lambda_{3}n} \left( W_{\{ 3,4 \}}^{(2)} \right)$\\
\hline
4 & $x_1^{\lambda_{4}n} \left( W_{\{2,3,4 \}}^{(1)} \right) + x_2^{\lambda_{4}n} \left( W_{\{ 3,4 \}}^{(3)} \right)$\\
\hline
\end{tabular}
\label{TableDeliveryExample1}
\end{table}

\subsection{Example}\label{CSCSectionExample}
Consider a cache-aided network as described in Section \ref{SystemModel} with $K =4$ users in the system. Here we exemplify the achievability of the rate region stated in Theorem \ref{AchievablePairsCSCSchemeTheorem} for the CSC scheme for $t=2$. We set $R^{(4)}=0$, and split the messages in the $l$-th layer, for $l \in [3]$, as follows:
\begin{subequations}
\label{layersExmp}
\begin{align}\label{layersExmp1}
W_j^{\left( 1 \right)}  =& \left( {W_{j,\{ 1,2 \}}^{\left( 1 \right)}},{W_{j,\{ 1,3 \}}^{\left( 1 \right)}},{W_{j,\{ 1,4 \}}^{\left( 1 \right)}},{W_{j,\{ 2,3 \}}^{\left( 1 \right)}},{W_{j,\{ 2,4 \}}^{\left( 1 \right)}},{W_{j,\{ 3,4 \}}^{\left( 1 \right)}}\right),\\
W_j^{\left( 2 \right)} =& \left( {W_{j,\{ 2 \}}^{\left( 2 \right)}},{W_{j,\{ 3 \}}^{\left( 2 \right)}},{W_{j,\{ 4 \}}^{\left( 2 \right)}}\right), 
\label{layersExmp2}\\
W_j^{\left( 3 \right)} =& \left({W_{j,\{ 3 \}}^{\left( 3 \right)}}, {W_{j,\{ 4 \}}^{\left( 3 \right)}}\right),
\label{layersExmp3}
\end{align}
\end{subequations}
where subfile ${W_{j,\mathcal{S}_{\mathcal{K}_l,i}^{t_l}}^{(l)}}$ is of rate $R_{\mathcal{S}_{\mathcal{K}_l,i}^{t_l}}^{(l)}$, for $i \in \left[ \binom{5-l}{t_l} \right]$, $\forall j \in [N]$, and $t_1=2, t_2=t_3=1$, and $t_4=0$.

\begin{table}[!t]
\caption{Decoding the message with $x_1^{\lambda_{i} n}$ at TS $i$, for $i=1, ..., 4$}
\centering
\begin{tabular}{ |c|c| }
\hline
TS number & Sufficient conditions\\
\hline
\multirow{3.8}{*}{1} & \vtop{\hbox{\strut $ R^{(1)}_{\{2,3 \}} \le \lambda_1 C_{\bar{\alpha}_1P+\sigma^2_1}^{\alpha_1P}$} \hbox{\strut $R^{(1)}_{\{1,3 \}} \le \lambda_1 C_{\bar{\alpha}_1P+\sigma^2_2}^{\alpha_1P}$} \hbox{\strut $R^{(1)}_{\{1,2 \}} \le \lambda_1 C_{\bar{\alpha}_1P+\sigma^2_3}^{\alpha_1P}$}} \\
\hline
\multirow{3.8}{*}{2} & \vtop{\hbox{\strut $R^{(1)}_{\{2,4 \}} \le \lambda_2 C_{\bar{\alpha}_2P+\sigma^2_1}^{\alpha_2P}$} \hbox{\strut $R^{(1)}_{\{1,4 \}} \le \lambda_2 C_{\bar{\alpha}_2P+\sigma^2_2}^{\alpha_2P}$} \hbox{\strut $R^{(1)}_{\{1,2 \}} \le \lambda_2 C_{\bar{\alpha}_2P+\sigma^2_4}^{\alpha_2P}$}}\\
\hline
\multirow{3.8}{*}{3} & \vtop{\hbox{\strut $R^{(1)}_{\{3,4 \}} \le \lambda_3 C_{\bar{\alpha}_3P+\sigma^2_1}^{\alpha_3P}$} \hbox{\strut $R^{(1)}_{\{1,4 \}} \le \lambda_3 C_{\bar{\alpha}_3P+\sigma^2_3}^{\alpha_3P}$} \hbox{\strut $R^{(1)}_{\{1,3 \}} \le \lambda_3 C_{\bar{\alpha}_3P+\sigma^2_4}^{\alpha_3P}$}}\\
\hline
\multirow{3.8}{*}{4} & \vtop{\hbox{\strut $R^{(1)}_{\{3,4 \}} \le \lambda_4 C_{\bar{\alpha}_4P+\sigma^2_2}^{\alpha_4P}$} \hbox{\strut $R^{(1)}_{\{2,4 \}} \le \lambda_4 C_{\bar{\alpha}_4P+\sigma^2_3}^{\alpha_4P}$} \hbox{\strut $R^{(1)}_{\{2,3 \}} \le \lambda_4 C_{\bar{\alpha}_4P+\sigma^2_4}^{\alpha_4P}$}}\\
\hline
\end{tabular}
\label{TableDeliveryExample1Codeword1}
\end{table}

The cache content of each user is given by
\begin{subequations}
\label{CacheContentsExmp}
\begin{align}\label{CacheContentsExmp1}
{U_1} & = \bigcup\limits_{j \in \left[ N \right]} {\left( {W_{j,\left\{ 1,2 \right\}}^{\left( 1 \right)},W_{j,\left\{ {1,3} \right\}}^{\left( 1 \right)},W_{j,\left\{ {1,4} \right\}}^{\left( 1 \right)}} \right)},\\
{U_2} & = \bigcup\limits_{j \in \left[ N \right]} {\left( W_{j,\left\{ 1,2 \right\}}^{\left( 1 \right)},W_{j,\left\{ {2,3} \right\}}^{\left( 1 \right)},W_{j,\left\{ {2,4} \right\}}^{\left( 1 \right)}, W_{j,\left\{ {2} \right\}}^{\left( 2 \right)} \right)} ,\label{CacheContentsExmp2}\\
{U_3} & = \bigcup\limits_{j \in \left[ N \right]} {\left( W_{j,\left\{ 1,3 \right\}}^{\left( 1 \right)},W_{j,\left\{ {2,3} \right\}}^{\left( 1 \right)},W_{j,\left\{ {3,4} \right\}}^{\left( 1 \right)}, W_{j,\left\{ {3} \right\}}^{\left( 2 \right)}, W_{j,\left\{ {3} \right\}}^{\left( 3 \right)} \right)},
\label{CacheContentsExmp3}\\
{U_4} & = \bigcup\limits_{j \in \left[ N \right]} {\left( W_{j,\left\{ 1,4 \right\}}^{\left( 1 \right)},W_{j,\left\{ {2,4} \right\}}^{\left( 1 \right)},W_{j,\left\{ {3,4} \right\}}^{\left( 1 \right)}, W_{j,\left\{ {4} \right\}}^{\left( 2 \right)}, W_{j,\left\{ {4} \right\}}^{\left( 3 \right)} \right)},
\label{CacheContentsExmp4}
\end{align}
\end{subequations}
where the total cache capacity in the system is 
\begin{equation}\label{TotalCacheCapacityExmpl}
   M = N \left( 2R^{(1)}+R^{(2)}+R^{(3)} \right). 
\end{equation}

For a demand vector $\textbf{d} = \left( d_1, \dots, d_4 \right)$ in the delivery phase, we generate coded packet $W^{(l)}_{\mathcal{S}_{\mathcal{K}_l,i}^{t_l+1}}$, for $i \in \left[ \binom{5-l}{t_l+1}\right]$ and $l \in [3]$, as given in \eqref{CodedPacketDef}. The transmission is performed via 4 orthogonal TSs, where the $i$-th TS is of length $\lambda_in$ channel uses, for $i \in [4]$, such that $\sum\nolimits_{i=1}^{4} \lambda_i = 1$. After generating codebooks $\mathcal C_1$ and $\mathcal C_2$ in the $i$-th TS as defined in \eqref{CSCCodebooksDef}, for $i \in [4]$, the codeword outlined in Table \ref{TableDeliveryExample1} is sent over the channel via each TS. In TS $i$, the users, which the message with $x_1^{\lambda_{i}n}$ is targeted to, decode the message with $x_1^{\lambda_{i}n}$ while considering $x_2^{\lambda_{i}n}$ as noise, for $i \in [4]$. The sufficient conditions, for which the message with $x_1^{\lambda_{i}n}$ can be decoded successfully by each targeted user, for $n$ large enough, at TS $i$ are summarized in Table \ref{TableDeliveryExample1Codeword1}, for $i \in [4]$, thanks to the side information available at users' caches. We note that each user, for which the message with codeword $x_2^{\lambda_{i}n}$ is targeted to, can decode the message with $x_1^{\lambda_{i}n}$ having the conditions in Table \ref{TableDeliveryExample1Codeword1} satisfied, for $i \in [4]$. Bearing this in mind, the sufficient conditions such that the message with $x_2^{\lambda_{i}n}$ is decoded successfully by the intended users are outlined in Table \ref{TableDeliveryExample1Codeword23}, for $i \in [4]$. We note that the conditions in Table \ref{TableDeliveryExample1Codeword1} and Table \ref{TableDeliveryExample1Codeword23} guarantee the achievability of the rate tuple presented in Theorem \ref{AchievablePairsCSCSchemeTheorem} for the corresponding total cache capacity $M$ given in \eqref{TotalCacheCapacityExmpl}, which is equivalent to the one in \eqref{AchievableRateTheoremCSCCDPCCache} for the CSC scheme with $t=2$, for some $\boldsymbol{\alpha}$ and $\boldsymbol{\lambda}$ satisfying \eqref{AchievableRateTheoremCSCCDPCalphaiAndni}-\eqref{AchievableRateTheoremCSCCDPCalphaiAndnisumni}.

\begin{table}[!t]
\caption{Decoding the message with $x_2^{\lambda_{i} n}$ at TS $i$, for $i=1, ..., 4$}
\centering
\begin{tabular}{ |c|c| }
\hline
TS number & Sufficient conditions\\
\hline
\multirow{2.4}{*}{1} & \vtop{\hbox{\strut $R^{(2)}_{\{ 3\}} \le \lambda_1 C^{\bar{\alpha}_1 P}_{\sigma_2^2}$} \hbox{\strut $R^{(2)}_{\{ 2\}} \le \lambda_1 C^{\bar{\alpha}_1 P}_{\sigma_3^2}$}} \\
\hline
\multirow{2.4}{*}{2} & \vtop{\hbox{\strut $R^{(2)}_{\{ 4\}} \le \lambda_2 C^{\bar{\alpha}_2 P}_{\sigma_2^2}$} \hbox{\strut $R^{(2)}_{\{ 2\}} \le \lambda_2 C^{\bar{\alpha}_2 P}_{\sigma_4^2}$}}\\
\hline
\multirow{2.4}{*}{3} & \vtop{\hbox{\strut $R^{(2)}_{\{ 4\}} \le \lambda_3 C^{\bar{\alpha}_3 P}_{\sigma_3^2}$} \hbox{\strut $R^{(2)}_{\{ 3\}} \le \lambda_3 C^{\bar{\alpha}_3 P}_{\sigma_4^2}$}}\\
\hline
\multirow{2.4}{*}{4} & \vtop{\hbox{\strut $R^{(3)}_{\{ 4\}} \le \lambda_4 C^{\bar{\alpha}_4 P}_{\sigma_3^2}$} \hbox{\strut $R^{(3)}_{\{ 3\}} \le \lambda_4 C^{\bar{\alpha}_4 P}_{\sigma_4^2}$}}\\
\hline
\end{tabular}
\label{TableDeliveryExample1Codeword23}
\end{table}

\subsection{Joint Cache and Dirty Paper Coding (CDPC) Scheme}\label{CDPCScheme}

In the following, we investigate the delivery via TS $i$, for $i \in \left[1+g_{l-1}:g_{l}\right]$ and $l \in [K-t]$. The codebook of the transmission with the CDPC scheme is also generated from the linear superposition of two subcodebooks. The subcodebook $\mathcal{C}_2$, given in \eqref{CSCCodebooksDef2}, is generated from i.i.d. codewords $x_2^{\lambda_i n}$, each according to distribution $X_2 \sim \mathcal{N} \left( 0,\bar{\alpha}_i P \right)$, used to send the coded packet $W_{\mathcal{S}_{\mathcal{K}_{l+1},i-g_{l-1}}^{t}}^{(l+1)}$ of rate $R^{(l+1)}_{{\rm{XOR}},\mathcal{S}_{\mathcal{K}_{l+1},i-g_{l-1}}^{t}}$. By treating $x_2^{\lambda_i n}$ as interference for user $l$, knowing $X_2^{\lambda_{i}}$ non-causally at the server, subcodebook $\mathcal{C}_1$ is generated using dirty paper coding \cite{DirtyPaperCosta}. The auxiliary random variable with the dirty paper coding is set as $Q = X_1 + \tau X_2$, where $\tau = \alpha_i P/\left( \alpha_i P+\sigma_l^2 \right)$, and $X_1 \sim \mathcal{N} \left( 0,\alpha_i P \right)$ is independent of $X_2$. We extend the codebook generation, encoding, and decoding techniques of the Gelfand-Pinkser scheme for point-to-point transmission presented in the proof of \cite[Theorem 7.3]{AEGamalNetworkInfTheory} to the transmission in the $i$-th TS of the setting under consideration, for $i \in \left[1+g_{l-1}:g_{l}\right]$ and $l \in [K-t]$. We define a message tuple $\textbf{w}_{\mathcal{S}_{\mathcal{K}_1,i}^{t+1}} \buildrel \Delta \over = \left( w_{d_{k_1},\mathcal{S}_{\mathcal{K}_1,i}^{t+1} \backslash \{k_1\}}^{(1)}, \mbox{for $k_1 \in \mathcal{S}_{\mathcal{K}_1,i}^{t+1}$} \right)$, which concatenates $t+1$ messages $w_{d_{k_1},\mathcal{S}_{\mathcal{K}_1,i}^{t+1} \backslash \{k_1\}}^{(1)}$, $\forall k_1 \in \mathcal{S}_{\mathcal{K}_1,i}^{t+1}$, where $w_{d_{k_1},\mathcal{S}_{\mathcal{K}_1,i}^{t+1} \backslash \{k_1\}}^{(1)}$ is uniformly distributed over $\left[ 2^{n R^{(1)}_{\mathcal{S}_{\mathcal{K}_1,i}^{t+1}\backslash \{ k_1\}}} \right]$ and represents the message used to send subfile $W_{d_{k_1},\mathcal{S}_{\mathcal{K}_1,i}^{t+1} \backslash \{k_1\}}^{(1)}$, for $k_1 \in \mathcal{S}_{\mathcal{K}_1,i}^{t+1}$. For each realization of $\textbf{w}_{\mathcal{S}_{\mathcal{K}_1,i}^{t+1}}$, we generate a subcodebook $\mathcal{C}_1 \left( \textbf{w}_{\mathcal{S}_{\mathcal{K}_1,i}^{t+1}} \right)$ of $2^{\lambda_in \tilde{R} - n \sum\limits_{k_1 \in \mathcal{S}_{\mathcal{K}_1,i}^{t+1}} {R^{(1)}_{\mathcal{S}_{\mathcal{K}_1,i}^{t+1}\backslash \{ k_1\}}} }$ sequences $q^{\lambda_in} (m)$, for $m \in \left[ 2^{\lambda_i n \tilde{R} - n \sum\limits_{k_1 \in \mathcal{S}_{\mathcal{K}_1,i}^{t+1}} {R^{(1)}_{\mathcal{S}_{\mathcal{K}_1,i}^{t+1}\backslash \{ k_1\}}} } \right]$. Given $x_2^{\lambda_in}$, in order to send $t+1$ messages with message tuple $\textbf{w}_{\mathcal{S}_{\mathcal{K}_1,i}^{t+1}}$ jointly, we find a sequence $q^{\lambda_i n} (m) \in \mathcal{C}_1 \left( \textbf{w}_{\mathcal{S}_{\mathcal{K}_1,i}^{t+1}} \right)$ that is jointly typical with $x_2^{\lambda_i n}$ and represent the corresponding codeword, which is to be sent over the channel, by 
\begin{align}\label{Codewordx1CDPC}
x_1^{\lambda_i n} \left( \textbf{w}_{\mathcal{S}_{\mathcal{K}_1,i}^{t+1}},x_2^{\lambda_i n} \right).
\end{align}
The server then sends the following linearly superposed codeword over the Gaussian BC:
\begin{align}\label{CDPCServerTransmittedSignal}
    x_1^{\lambda_{i} n} \left( \left( W_{d_{k_1},\mathcal{S}_{\mathcal{K}_1,i}^{t+1} \backslash \{k_1\}}^{(1)}, \mbox{for $k_1 \in \mathcal{S}_{\mathcal{K}_1,i}^{t+1}$} \right), x_2^{\lambda_{i} n} \right) + x_2^{\lambda_{i} n} \left( W_{\mathcal{S}_{\mathcal{K}_{l+1},i-g_{l-1}}^{t}}^{(l+1)} \right).
\end{align}

User $k_2 \in {\mathcal{S}_{\mathcal{K}_{l+1},i-g_{l-1}}^{t}}$ first decodes the message with codeword $x_2^{\lambda_{i}n}$, while considering $x_1^{\lambda_{i}n}$ as noise, which by the same analysis as the CSC scheme, one can obtain the necessary condition for a successful decoding as follows:
\begin{align}\label{CDPCRecoverUsersFirstLayerDecoding}
    & R^{(l+1)}_{\mathcal{S}_{\mathcal{K}_{l+1},i-g_{l-1}}^{t} \backslash \{k_2\}} \le \lambda_i C^{\bar{\alpha}_i P}_{\alpha_i P+\sigma_{k_2}^2}, \quad \mbox{for $i \in \left[1+g_{l-1}:g_{l}\right]$ and $l \in [K-t]$}. 
\end{align}
On the other hand, to decode $t+1$ messages with $x_1^{\lambda_i n}$, upon receiving $y_{k_1}^{\lambda_i n}$, user $k_1$, for $k_1 \in {\mathcal{S}_{\mathcal{K}_{1}, i}^{t+1}}$, declares that $t+1$ messages $\hat{w}_{d_{\tilde{k}_1},\mathcal{S}_{\mathcal{K}_1,i}^{t+1} \backslash \{\tilde{k}_1\}}^{(1)} \in \left[ 2^{n R^{(1)}_{\mathcal{S}_{\mathcal{K}_1,i}^{t+1}\backslash \{ \tilde{k}_1\}}} \right]$, $\forall \tilde{k}_1 \in {\mathcal{S}_{\mathcal{K}_{1}, i}^{t+1}}$, are sent if $\hat{\textbf{w}}_{\mathcal{S}_{\mathcal{K}_1,i}^{t+1}} \buildrel \Delta \over = \left( \hat{w}_{d_{\tilde{k}_1},\mathcal{S}_{\mathcal{K}_1,i}^{t+1} \backslash \{\tilde{k}_1\}}^{(1)}, \mbox{for $\tilde{k}_1 \in \mathcal{S}_{\mathcal{K}_1,i}^{t+1}$} \right)$ is the unique message tuple such that $q^{\lambda_i n} (m)$ and $y_{k_1}^{\lambda_i n}$ are jointly typical, for some $m \in \mathcal{C}_1 \left( \hat{\textbf{w}}_{\mathcal{S}_{\mathcal{K}_1,i}^{t+1}} \right)$, where message ${w}_{d_{\tilde{k}_1},\mathcal{S}_{\mathcal{K}_1,i}^{t+1} \backslash \{\tilde{k}_1\}}^{(1)}$ is decoded as $\hat{w}_{d_{\tilde{k}_1},\mathcal{S}_{\mathcal{K}_1,i}^{t+1} \backslash \{\tilde{k}_1\}}^{(1)}$, for $\tilde{k}_1 \in {\mathcal{S}_{\mathcal{K}_{1}, i}^{t+1}}$.\footnote{For ease of notation, we drop the dependency of channel outputs ${Y^n_{1}(\textbf{W})}, \ldots, {Y^n_{K}(\textbf{W})}$ on $\textbf{W}$.} Here we note again that, for $l \in [K-t]$,
\begin{align}\label{CDPCLabelSubsetsInsideeachGroupK1Kk}
    {\mathcal{S}_{\mathcal{K}_{1}, i}^{t+1}} = \left\{{\mathcal{S}_{\mathcal{K}_{l+1}, i-g_{l-1}}^{t}},  l \right\}, \quad \mbox{for $i \in \left[1+g_{l-1}:g_{l}\right]$}. 
\end{align}
We assume without loss of generality that the message tuple $\textbf{w}_{\mathcal{S}_{\mathcal{K}_1,i}^{t+1}} = \left(1,\ldots,1 \right)$ is sent with $x_1^{\lambda_i n}$. The decoder at user $l \in {\mathcal{S}_{\mathcal{K}_{1}, i}^{t+1}}$ makes an error, if one or both of the following events occur:
\begin{subequations}
\label{CDPCUser1ErrorEvent}
\begin{align}\label{CDPCUser1ErrorEvent1}
    \mathcal{E}^{1}_{l} = & \left\{\mbox{$Q^{\lambda_i n}(m)$ and $X_2^{\lambda_i n}$ are not jointly typical, $\forall Q^{\lambda_i n}(m) \in \mathcal{C}_1 (1,\ldots,1)$} \right\}, \\
    \mathcal{E}^{2}_{l} = & \left\{\mbox{$Q^{\lambda_i n}(m)$ and $Y_l^{\lambda_i n}$ are jointly typical, for some $Q^{\lambda_i n}(m) \notin \mathcal{C}_1 (1,\ldots,1)$} \right\}.\label{CDPCUser1ErrorEvent2}
\end{align}
\end{subequations}
According to \cite[Lemma 3.3]{AEGamalNetworkInfTheory}, $\Pr \left\{ \mathcal{E}^{1}_{l}  \right\}$ tends to zero, if, for $n$ large enough, 
\begin{align}\label{CDPCUser1ErrorEvent1TendsZero}
\lambda_i \tilde{R} - \sum\limits_{k_1 \in \mathcal{S}_{\mathcal{K}_1,i}^{t+1}} {R^{(1)}_{\mathcal{S}_{\mathcal{K}_1,i}^{t+1}\backslash \{ k_1\}}} \ge \lambda_i I \left( Q;X_2 \right). 
\end{align}
Furthermore, since user $l \in \mathcal{S}_{\mathcal{K}_1,i}^{t+1}$ has access to all $t$ messages $W_{d_{k},\mathcal{S}_{\mathcal{K}_1,i}^{t+1} \backslash \{k\}}^{(1)}$, $\forall k \in \mathcal{S}_{\mathcal{K}_1,i}^{t+1} \backslash \{ l \}$, in its cache, it knows that $w_{d_{k},\mathcal{S}_{\mathcal{K}_1,i}^{t+1} \backslash \{k\}}^{(1)} = 1$, $\forall k \in \mathcal{S}_{\mathcal{K}_1,i}^{t+1} \backslash \{ l \}$, and $\Pr \left\{ \mathcal{E}^{2}_{l}  \right\}$ tends to zero, if, for $n$ large enough, 
\begin{align}\label{CDPCUser1ErrorEvent2TendsZero}
\lambda_i \tilde{R} - \sum\limits_{k_1 \in \mathcal{S}_{\mathcal{K}_1,i}^{t+1} \backslash \{ l \}} {R^{(1)}_{\mathcal{S}_{\mathcal{K}_1,i}^{t+1}\backslash \{ l\}}} \le \lambda_i I \left( Q;Y_l \right), 
\end{align}
where $Y_k = X_1+X_2+Z_k$, where $Z_k \sim \mathcal{N} \left( 0,\sigma_k^2 \right)$, for $k \in [K]$. Combining \eqref{CDPCUser1ErrorEvent1TendsZero} and \eqref{CDPCUser1ErrorEvent2TendsZero}, we obtain that user $l \in \mathcal{S}_{\mathcal{K}_1,i}^{t+1}$ decodes the message with $x_1^{\lambda_{i} n}$ successfully, if 
\begin{align}\label{CDPCRecoverUsersFirstLayerDecodingGeneralFirstUser}
    R^{(1)}_{\mathcal{S}_{\mathcal{K}_1,i}^{t+1} \backslash \{l\}} \le \lambda_i \left( I \left( Q;Y_l \right) - I \left( Q;X_2 \right) \right), 
\end{align}
which is equivalent to
\begin{align}\label{CDPCRecoverUsersHigherLayerDecodingFirstUser}
    R^{(1)}_{\mathcal{S}_{\mathcal{K}_1,i}^{t+1} \backslash \{l\}} \le \lambda_i C^{\alpha_iP}_{\sigma_{l}^2}. 
\end{align}
Now we investigate the sufficient conditions for which users in ${\mathcal{S}_{\mathcal{K}_{1}, i}^{t+1}} \backslash \{ l \} = {\mathcal{S}_{\mathcal{K}_{l+1}, i-g_{l-1}}^{t}}$ can decode the message with $x_1^{\lambda_i n}$. We note that having the conditions in \eqref{CDPCRecoverUsersFirstLayerDecoding} satisfied, each user in ${\mathcal{S}_{\mathcal{K}_{l+1}, i-g_{l-1}}^{t}}$ can decode the message with $x_2^{\lambda_i n}$. The decoder at user $k$, for $k \in {\mathcal{S}_{\mathcal{K}_{l+1}, i-g_{l-1}}^{t}}$, makes an error, if one or both of the following events occur:
\begin{subequations}
\label{CDPCHigherUsersErrorEvent}
\begin{align}\label{CDPCHigherUsersErrorEvent1}
    \mathcal{E}^{1}_{k} = & \left\{\mbox{$Q^{\lambda_i n}(m)$ and $X_2^{\lambda_in}$ are not jointly typical, $\forall Q^{\lambda_in}(m) \in \mathcal{C}_1 (1,\ldots,1)$} \right\}, \\
    \mathcal{E}^{2}_{k} = & \left\{\mbox{$Q^{\lambda_in}(m)$ and $\left( Y_{k}^{\lambda_in} , X_2^{\lambda_in} \right)$ are jointly typical, for some $Q^{\lambda_in}(m) \notin \mathcal{C}_1 (1,\ldots,1)$} \right\}.\label{CDPCHigherUsersErrorEvent2}
\end{align}
\end{subequations}
$\Pr \left\{ \mathcal{E}^{1}_{k}  \right\}$ tends to zero, if, for $n$ large enough,
\begin{align}\label{CDPCHigherUsersErrorEvent1TendsZero}
\lambda_i \tilde{R} - \sum\limits_{k_1 \in \mathcal{S}_{\mathcal{K}_1,i}^{t+1}} {R^{(1)}_{\mathcal{S}_{\mathcal{K}_1,i}^{t+1}\backslash \{ k_1\}}} \ge \lambda_i I \left( Q;X_2 \right), \quad \mbox{for $k \in {\mathcal{S}_{\mathcal{K}_{l+1}, i-g_{l-1}}^{t}}$}. 
\end{align}
Furthermore, since user $k$, for $k \in {\mathcal{S}_{\mathcal{K}_{l+1}, i-g_{l-1}}^{t}}$ has access to all $t$ messages $W_{d_{k_1},\mathcal{S}_{\mathcal{K}_1,i}^{t+1} \backslash \{k_1\}}^{(1)}$, $\forall k_1 \in \mathcal{S}_{\mathcal{K}_1,i}^{t+1} \backslash \{ k \}$, in its cache, it knows that $w_{d_{k_1},\mathcal{S}_{\mathcal{K}_1,i}^{t+1} \backslash \{k_1\}}^{(1)} = 1$, $\forall k_1 \in \mathcal{S}_{\mathcal{K}_1,i}^{t+1} \backslash \{ k \}$, and $\Pr \left\{ \mathcal{E}^{2}_{k}  \right\}$ tends to zero, if, for $n$ large enough, 
\begin{align}\label{CDPCHigherUsersErrorEvent2TendsZero}
\lambda_i \tilde{R} - \sum\limits_{k_1 \in \mathcal{S}_{\mathcal{K}_1,i}^{t+1} \backslash \{ k \}} {R^{(1)}_{\mathcal{S}_{\mathcal{K}_1,i}^{t+1}\backslash \{ k_1\}}} \le \lambda_i I \left( Q;Y_l,X_2 \right). 
\end{align}
Combining \eqref{CDPCHigherUsersErrorEvent1TendsZero} and \eqref{CDPCHigherUsersErrorEvent2TendsZero}, we obtain that user $k$, for $k \in {\mathcal{S}_{\mathcal{K}_{l+1}, i-g_{l-1}}^{t}}$ decodes the message with $x_1^{\lambda_{i}n}$ successfully, if 
\begin{align}\label{CDPCRecoverUsersFirstLayerDecodingGeneralHigherUsers}
    R^{(1)}_{\mathcal{S}_{\mathcal{K}_1,i}^{t+1} \backslash \{{k}\}} \le \lambda_i \left( I \left( Q;Y_{k},X_2 \right) - I \left( Q;X_2 \right) \right) = \lambda_i I \left( Q;Y_{k} \left| X_2 \right. \right), 
\end{align}   
which leads to
\begin{align}\label{CDPCRecoverUsersHigherLayerDecodingHigherUsers}
    R^{(1)}_{\mathcal{S}_{\mathcal{K}_1,i}^{t+1} \backslash \{{k}\}} \le \lambda_i C^{\alpha_i P}_{\sigma_{k}^2}. 
\end{align}
By combining the conditions in \eqref{CDPCRecoverUsersHigherLayerDecodingFirstUser} and \eqref{CDPCRecoverUsersHigherLayerDecodingHigherUsers}, we conclude that, at TS $i$, user $k_1$, for $k_1 \in \mathcal{S}_{\mathcal{K}_1,i}^{t+1}$, can decode the message with $x_1^{\lambda_{i}n}$ successfully, if
\begin{align}\label{CDPCRecoverUsersHigherLayerDecoding}
    R^{(1)}_{\mathcal{S}_{\mathcal{K}_1,i}^{t+1} \backslash \{{k_1}\}} \le \lambda_i C^{\alpha_i P}_{\sigma_{k_1}^2}, \quad &\mbox{for $i \in \left[1+g_{l-1}:g_{l}\right]$ and $l \in [K-t]$}. 
\end{align}

Having the conditions in \eqref{CDPCRecoverUsersFirstLayerDecoding} and \eqref{CDPCRecoverUsersHigherLayerDecoding} satisfied, the achievability of the rate tuple for the corresponding total cache capacity $M$ presented in Theorem \ref{AchievablePairsCSCSchemeTheorem} for the CDPC scheme is proved.

\section{Outer Bound}\label{OuterBound} 
In the following, we develop an outer bound on the capacity region $\mathcal{C} (P,M)$ constrained to uncoded caching in the placement phase.

\begin{theorem}\label{OuterBoundTheorem}
Consider the system described in Section \ref{SystemModel} with average power $P$, where user $k$ has a cache capacity of $M_k$, $k \in [K]$. If the placement phase is constrained to uncoded caching, for any non-empty subset $\mathcal{G} \subset [K]$, we have, for $k=1, \ldots, \left| \mathcal{G} \right|$, 
\begin{align}\label{OuterBoundRk}
R_{\pi_{\mathcal{G}}(k)} \le C_{\sum\limits_{i=k+1}^{\left| \mathcal{G} \right|} {\eta^{\mathcal G}_{\pi_{\mathcal{G}}(i)} P} + \sigma_{\pi_{\mathcal{G}}(k)}^2}^{\eta^{\mathcal G}_{\pi_{\mathcal{G}}(k)} P} + \frac{1}{N} \sum\limits_{i=1}^{k} M_{\pi_{\mathcal{G}}(i)}, 
\end{align}
for some non-negative coefficients $\eta^{\mathcal G}_{\pi_{\mathcal{G}}(1)}, \ldots, \eta^{\mathcal G}_{\pi_{\mathcal{G}}(\left| \mathcal G \right|)}$, such that $\sum\nolimits_{i=1}^{\left| \mathcal G \right|} \eta^{\mathcal G}_{\pi_{\mathcal{G}}(i)} =1$, where $\pi_{\cal G}$ is a permutation of the elements of ${\cal G}$, such that $\sigma_{\pi_{\mathcal G}(1)}^2 \ge \sigma_{\pi_{\mathcal G}(2)}^2 \ge \cdots \ge \sigma_{\pi_{\mathcal G}(\left| \mathcal G \right|)}^2$.
\end{theorem}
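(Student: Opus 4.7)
The plan is to derive the outer bound by combining a genie-aided Bergmans-style converse for the degraded Gaussian BC induced by $\mathcal{G}$ with a demand-averaging argument that exploits the uncoded placement constraint.

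First, I would restrict attention to the users in $\mathcal{G}$. Writing $u_k = \pi_{\mathcal{G}}(k)$ and $m = |\mathcal{G}|$, the ordering $\sigma_{u_1}^2 \ge \cdots \ge \sigma_{u_m}^2$ means the induced sub-BC is a degraded Gaussian BC, with $u_1$ the weakest and $u_m$ the strongest user. As a valid outer-bound enhancement, I would provide user $u_k$ with the additional genie side information $(U_{u_1}, \ldots, U_{u_{k-1}})$, i.e., the caches of the weaker users in $\mathcal{G}$, and the requests $(W_{d_{u_{k+1}}}, \ldots, W_{d_{u_m}})$ of the stronger users. Fano's inequality then gives, for any demand vector $\mathbf{d}$,
\begin{equation*}
n R_{u_k} \le I(W_{d_{u_k}}; Y_{u_k}^n \,|\, U_{u_1}, \ldots, U_{u_k}, W_{d_{u_{k+1}}}, \ldots, W_{d_{u_m}}) + I(W_{d_{u_k}}; U_{u_1}, \ldots, U_{u_k}) + n \epsilon_n.
\end{equation*}

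Next, I would bound the first mutual-information term using the classical Bergmans/EPI converse for the degraded Gaussian BC. Iterating the entropy-power inequality down the degraded chain of conditional entropies $h(Y_{u_k}^n \mid \cdots)$ and defining normalized effective power shares $\eta^{\mathcal{G}}_{u_1}, \ldots, \eta^{\mathcal{G}}_{u_m} \ge 0$ with $\sum_i \eta^{\mathcal{G}}_{u_i} = 1$ from those entropies, one obtains the bound $n \cdot C^{\eta^{\mathcal{G}}_{u_k} P}_{\sum_{i=k+1}^{m} \eta^{\mathcal{G}}_{u_i} P + \sigma_{u_k}^2}$, which is the Gaussian BC capacity facet appearing in \eqref{OuterBoundRk}. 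The same power allocation $\eta^{\mathcal{G}}$ serves every $k$ because the EPI construction chains these conditional entropies across $k$ in a single pass.

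For the cache term, I would exploit the uncoded placement: each cached bit is literally a library bit. Letting $B^{j}_{u_i}$ denote the number of bits of $W_j$ present in $U_{u_i}$, the cache constraint reads $\sum_{j=1}^{N} B^{j}_{u_i} \le n M_{u_i}$, and clearly $I(W_{d_{u_k}}; U_{u_1}, \ldots, U_{u_k}) \le \sum_{i=1}^{k} B^{d_{u_k}}_{u_i}$. Averaging the per-demand inequality over the demand vectors in which the users in $\mathcal{G}$ request distinct files (so that $d_{u_k}$ is marginally uniform over $[N]$) gives $\mathbb{E}[B^{d_{u_k}}_{u_i}] \le n M_{u_i}/N$, producing the $\frac{1}{N}\sum_{i=1}^{k} M_{u_i}$ term in \eqref{OuterBoundRk}.

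The main obstacle I foresee is reconciling the per-demand Bergmans step, whose $\eta$ parameters are a priori demand-dependent, with the demand-averaging step needed for the $1/N$ factor in the cache term. I plan to handle this by defining $\eta^{\mathcal{G}}$ from demand-averaged conditional differential entropies (equivalently, by introducing a time-sharing auxiliary indexed by $\mathbf{d}$) and invoking the concavity of the Gaussian entropy functionals so that the EPI chain survives the averaging. When $N < m$, the argument must additionally restrict to an appropriate sub-class of demands for which this symmetrization remains valid.
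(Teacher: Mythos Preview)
Your overall architecture—genie enhancement, Bergmans/EPI for the channel term, demand-averaging for the cache term—matches the paper's, but the genie you specify points the wrong way and this breaks the Bergmans step. You hand user $u_k$ the messages of the \emph{stronger} users $W_{d_{u_{k+1}}},\ldots,W_{d_{u_m}}$; the paper (following the classical degraded-BC converse) instead conditions on the \emph{weaker} users' messages $W_{d_{u_1}},\ldots,W_{d_{u_{k-1}}}$ together with the caches $U_{u_1},\ldots,U_{u_k}$. The direction matters: since $Y_{u_{k-1}}=Y_{u_k}+\tilde Z$, the EPI yields a \emph{lower} bound on $h(Y_{u_{k-1}}^n\mid\cdot)$ from $h(Y_{u_k}^n\mid\cdot)$. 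In the paper's chain that lower bound is precisely what is subtracted in $R_{u_{k-1}}\le\tfrac1n\bigl[h(Y_{u_{k-1}}^n\mid W_{d_{u_1}},\ldots,W_{d_{u_{k-2}}},\ldots)-h(Y_{u_{k-1}}^n\mid W_{d_{u_1}},\ldots,W_{d_{u_{k-1}}},\ldots)\bigr]$, so the chain closes and a single $\eta^{\mathcal G}$ emerges. With your conditioning, the EPI bounds the wrong entropy in the wrong direction, and moreover the conditioning sets $(U_{u_1},\ldots,U_{u_k},W_{d_{u_{k+1}}},\ldots,W_{d_{u_m}})$ do not nest across $k$, so you cannot tie the bounds together into the form~\eqref{OuterBoundRk}. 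Flip the genie to weaker users' messages and the rest of your plan goes through.

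Your treatment of the cache term and your diagnosis of the per-demand $\eta$ issue are on the mark, and the paper resolves the latter essentially as you suggest. It first single-letterizes for each fixed $\mathbf d$ to obtain auxiliaries $V_{\mathbf d,1}\to\cdots\to V_{\mathbf d,K-1}\to X_{\mathbf d}$, then averages the cache mutual-information term over all distinct-demand vectors (uncoded placement lets the conditioning on previously requested files be dropped, after which file symmetry gives $\tfrac1N\sum_{i\le k}M_{u_i}$), and finally applies the single-letter Gaussian degraded-BC bound to $I(V_{\mathbf d,k};Y_{\mathbf d,k}\mid V_{\mathbf d,k-1})$; absorbing the demand index into the auxiliary is exactly your time-sharing fix.
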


\begin{proof}
For ease of presentation, we prove the outer bound for $\mathcal{G} = [K]$, and the proof of general case follows similarly. By an abuse of the notation, for a demand vector $\textbf{d}$ with all different entries and noise variances $\boldsymbol{\sigma}$ in the delivery phase, we denote the channel input, generated by function $\psi_{\boldsymbol{\sigma},\textbf{d}}$, by $X^n_{\textbf{d}}$, and the channel output at user $k$ by $Y^n_{\textbf{d},k}$, where
\begin{equation}\label{ChannelModelAppendix} 
{Y^n_{\textbf{d},k}} = {X^n_{\textbf{d}}} + {Z^n_{k}}, \quad \mbox{for $k \in [K]$}. 
\end{equation}

\begin{lemma}\label{LemmaBCProofOuterBound}
Let $\left( R_1, \dots, R_K \right)$ be an achievable rate tuple. For a demand vector $\emph{\textbf{d}} = \left( d_1,\dots,d_K\right)$ with all distinct entries, there exist random variables $X_{\emph{\textbf{d}}}$, $Y_{\emph{\textbf{d}}.1}, \dots, Y_{\emph{\textbf{d}},K}$, and $\left\{ V_{\emph{\textbf{d}},1}, \dots, V_{\emph{\textbf{d}},K-1} \right\} $, where
\begin{equation}\label{MarkovChainOuterBound}
    V_{\emph{\textbf{d}},1} \to \cdots \to V_{\emph{\textbf{d}},K-1} \to X_{\emph{\textbf{d}}} \to Y_{\emph{\textbf{d}},K}  \to \cdots \to Y_{\emph{\textbf{d}},1}
\end{equation}
forms a Markov chain, and satisfy
\begin{subequations}
\label{ShirinAppendixLowerBound}
\begin{align}\label{ShirinAppendixLowerBound1}
R_1 - {\varepsilon} \le & I\left( {V}_{\emph{\textbf{d}},1};Y_{\emph{\textbf{d}},1} \right) + \frac{1}{n}I\left( {{W^{(1)}_{d_1}};{U_{1}}} \right),\\
\label{ShirinAppendixLowerBound2}
R_k - {\varepsilon} \le & I\left( {V}_{\emph{\textbf{d}},k}; Y_{\emph{\textbf{d}},k} \left| {V}_{\emph{\textbf{d}},k-1} \right. \right) + \frac{1}{n}I\left( \bigcup\limits_{l=1}^{k} W_{d_k}^{(l)};U_{1}, \dots, U_{k} \left| \bigcup\limits_{m=1}^{k-1} \bigcup\limits_{l=1}^{m} W_{d_m}^{(l)} \right. \right), \; \forall k \in [2:K-1],\\
\label{ShirinAppendixLowerBound3}
R_K - {\varepsilon} \le & I\left( X_{\emph{\textbf{d}}}; Y_{\emph{\textbf{d}},K} \left| {V}_{\emph{\textbf{d}},K-1} \right. \right) + \frac{1}{n}I\left( \bigcup\limits_{l=1}^{K} W_{d_k}^{(l)};U_{1}, \dots, U_{K} \left| \bigcup\limits_{m=1}^{K-1} \bigcup\limits_{l=1}^{m} W_{d_m}^{(l)} \right. \right),
\end{align}
\end{subequations}
where ${\varepsilon} >0$ tends to zero as $n \to \infty$. 
\end{lemma}
\begin{proof}
See Appendix \ref{ProffOfLemmaBC}. 
\end{proof}
Assuming $N \ge K$, let $\mathcal{D}_k$ be the set of all $\binom{N}{k} k!$ $k$-dimensional vectors, where all entries of each vector are distinct, and each entry of every vector takes a value in $[N]$, for $k \in [K]$. We note that $\mathcal{D}_K$ is the set of all demand vectors, each with all different entries. By averaging over all demand vectors with different entries, we can obtain from Lemma \ref{LemmaBCProofOuterBound} that
\begin{subequations}
\label{ProofOuterBoundR1}
\begin{align}\label{ProofOuterBoundR11}
R_1 - {\varepsilon} \le & I\left( {V}_{\textbf{d},1};Y_{\textbf{d},1} \right) + \frac{1}{\binom{N}{K}{K!}} \sum\limits_{\textbf{d} \in \mathcal{D}_K} \frac{1}{n}I\left( {{W^{(1)}_{d_1}};{U_{1}}} \right)\\
\label{ProofOuterBoundR12}
= & I\left( {V}_{\textbf{d},1};Y_{\textbf{d},1} \right) + \frac{1}{\binom{N}{K}{K!}} \binom{N-1}{K-1}(K-1)! \sum\nolimits_{j=1}^{N} \frac{1}{n}I\left( {{W^{(1)}_{j}};{U_{1}}} \right)\\
\label{ProofOuterBoundR13}
= & I\left( {V}_{\textbf{d},1};Y_{\textbf{d},1} \right) + \frac{1}{N} \sum\nolimits_{j=1}^{N} \frac{1}{n}I\left( {{W^{(1)}_{j}};{U_{1}}} \right)\\
\label{ProofOuterBoundR14}
\le & I\left( {V}_{\textbf{d},1};Y_{\textbf{d},1} \right) + \frac{1}{nN} I\left( \textbf{W}^{(1)};U_{1} \right)\\
\label{ProofOuterBoundR15}
\le & I\left( {V}_{\textbf{d},1};Y_{\textbf{d},1} \right) + \frac{M_1}{N},
\end{align}
\end{subequations}
where \eqref{ProofOuterBoundR14} follows from the independence of the files, and, for $k \in [2:K]$, 
\begin{subequations}
\label{ProofOuterBoundRk}
\begin{align}\label{ProofOuterBoundRk1}
R_k - {\varepsilon} \le & I\left( {V}_{\textbf{d},k}; Y_{\textbf{d},k} \left| {V}_{\textbf{d},k-1} \right. \right) + \frac{1}{\binom{N}{K}{K!}} \sum\limits_{\textbf{d} \in \mathcal{D}_K} \frac{1}{n}I\left( \bigcup\limits_{l=1}^{k} W_{d_k}^{(l)};U_{1}, \dots, U_{k} \left| \bigcup\limits_{m=1}^{k-1} \bigcup\limits_{l=1}^{m} W_{d_m}^{(l)} \right. \right)\\
\label{ProofOuterBoundRk2}
= & I\left( {V}_{\textbf{d},k}; Y_{\textbf{d},k} \left| {V}_{\textbf{d},k-1} \right. \right) + \nonumber\\
& \qquad \; \; \frac{1}{\binom{N}{K}{K!}} \sum\limits_{\tilde{\textbf{d}} \in \mathcal{D}_{k-1}} \sum\limits_{\textbf{d} \in \mathcal{D}_K: \left( d_1, ..., d_{k-1} \right) = \tilde{\textbf{d}}} \frac{1}{n}I\left( \bigcup\limits_{l=1}^{k} W_{d_k}^{(l)};U_{1}, \dots, U_{k} \left| \bigcup\limits_{m=1}^{k-1} \bigcup\limits_{l=1}^{m} W_{d_m}^{(l)} \right. \right)\\
\label{ProofOuterBoundRk3}
= & I\left( {V}_{\textbf{d},k}; Y_{\textbf{d},k} \left| {V}_{\textbf{d},k-1} \right. \right) + \frac{1}{\binom{N}{K}{K!}} \sum\limits_{\tilde{\textbf{d}} \in \mathcal{D}_{k-1}} \sum\limits_{\textbf{d} \in \mathcal{D}_K: \left( d_1, ..., d_{k-1} \right) = \tilde{\textbf{d}}} \frac{1}{n}I\left( \bigcup\limits_{l=1}^{k} W_{d_k}^{(l)};U_{1}, \dots, U_{k} \right)\\
\label{ProofOuterBoundRk4}
= & I\left( {V}_{\textbf{d},k}; Y_{\textbf{d},k} \left| {V}_{\textbf{d},k-1} \right. \right) +\nonumber\\
&\frac{1}{\binom{N}{K}{K!}} \sum\limits_{\tilde{\textbf{d}} \in \mathcal{D}_{k-1}} \sum\limits_{j \in [N] \backslash \left\{ \tilde{d}_1, ..., \tilde{d}_{k-1} \right\}} \frac{1}{n}I\left( \bigcup\limits_{l=1}^{k} W_{j}^{(l)};U_{1}, \dots, U_{k} \right) \binom{N-k}{K-k}(K-k)!\\
\label{ProofOuterBoundRk5}
= & I\left( {V}_{\textbf{d},k}; Y_{\textbf{d},k} \left| {V}_{\textbf{d},k-1} \right. \right) + \nonumber\\
& \frac{1}{\binom{N}{K}{K!}} \sum\limits_{j = 1}^{N} \frac{1}{n}I\left( \bigcup\limits_{l=1}^{k} W_{j}^{(l)};U_{1}, \dots, U_{k} \right) \binom{N-k}{K-k}(K-k)! \binom{N-1}{k-1}(k-1)!
\end{align}

\begin{align}
\label{ProofOuterBoundRk6}
= & I\left( {V}_{\textbf{d},k}; Y_{\textbf{d},k} \left| {V}_{\textbf{d},k-1} \right. \right) + \frac{1}{N} \sum\limits_{j = 1}^{N} \frac{1}{n}I\left( \bigcup\limits_{l=1}^{k} W_{j}^{(l)};U_{1}, \dots, U_{k} \right)\\
\label{ProofOuterBoundRk7}
\le & I\left( {V}_{\textbf{d},k}; Y_{\textbf{d},k} \left| {V}_{\textbf{d},k-1} \right. \right) + \frac{1}{nN}I\left( \bigcup\limits_{l=1}^{k} \textbf{W}^{(l)};U_{1}, \dots, U_{k} \right)\\
\label{ProofOuterBoundRk8}
\le & I\left( {V}_{\textbf{d},k}; Y_{\textbf{d},k} \left| {V}_{\textbf{d},k-1} \right. \right) + \frac{1}{N} \sum\limits_{i=1}^{k} M_k,
\end{align}
\end{subequations}
where \eqref{ProofOuterBoundRk3} follows from the assumption of uncoded caching and the independence of the files, $\tilde{d}_i$ in \eqref{ProofOuterBoundRk4}, for $i \in [k-1]$, returns the $i$-th element of vector $\tilde{\textbf{d}}$, \eqref{ProofOuterBoundRk7} follows from the the independence of the files, and we define $V_{\textbf{d},K} \buildrel \Delta \over = X$. For the Gaussian channel, we have \cite{BergmansCapacityDegradeBC}
\begin{equation}\label{LowerBoundAppendixBergman}
    I\left( V_{\textbf{d},k}; Y_{\textbf{d},k} \left| {V}_{\textbf{d},k-1} \right. \right) \le C_{\sum\limits_{i=k+1}^{K} {\eta_{i} P} + \sigma_{k}^2}^{\eta_{k} P} , \quad \mbox{for $k \in [K]$},
\end{equation}
for some non-negative coefficients $\eta_{1}, \ldots, \eta_{K}$, such that $\sum\nolimits_{i=1}^{K} \eta_{i} =1$, where we set ${V}_{\textbf{d},0} \buildrel \Delta \over = 0$. This completes the proof of Theorem \ref{OuterBoundTheorem} for $\mathcal{G} = [K]$. The proof can be extended to the general case by taking similar steps.  

\end{proof}

The outer bound on the capacity region can be found by considering the bound in \eqref{OuterBoundRk}, which provides an upper bound on the rate delivered to user $\pi_{\mathcal G}(k)$, for $k \in \mathcal G$, for all non-empty subsets $\mathcal{G} \subset [K]$ and all possible cache allocations $M_1, \ldots, M_K$, such that $\sum\nolimits_{k=1}^{K} M_k=M$. The convex hull of these tuples, calculated through \eqref{OuterBoundRk}, $\forall \mathcal{G} \subset [K]$ and all possible cache allocations with a total cache capacity $M$, also provides an outer bound on the capacity region. As a result, for given non-negative coefficients $w_1, \dots, w_K$, rate tuple $\left( R_1, \ldots, R_K \right)$ is on the boundary surface of the outer bound, if $R_{1}, \dots, R_{K}$ is a solution of the following problem:
\begin{subequations}
\label{DetailedOptimizationOuterBound}
\begin{align}\label{DetailedOptimizationOuterBound1}
&\mathop {\max }\limits_{\boldsymbol{\eta},M_1, \ldots, M_K,R_1, \ldots, R_K} \sum\limits_{i=1}^{K} w_i R_i, \nonumber\\
& \mbox{subject to $R_{\pi_{\mathcal{G}}(k)} \le C_{\sum\limits_{i=k+1}^{\left| \mathcal{G} \right|} {\eta^{\mathcal G}_{\pi_{\mathcal{G}}(i)} P} + \sigma_{\pi_{\mathcal{G}}(k)}^2}^{\eta^{\mathcal G}_{\pi_{\mathcal{G}}(k)} P} + \frac{1}{N} \sum\limits_{i=1}^{k} M_{\pi_{\mathcal{G}}(i)}$}, \; \forall k \in \mathcal{G}, \forall \mathcal G \subset [K]; \left| \mathcal{G} \right| \ne 0,\nonumber \\
& \qquad \qquad \; \sum\nolimits_{i=1}^{\left| \mathcal G \right|} \eta^{\mathcal G}_{\pi_{\mathcal{G}}(i)} =1, \; \forall \mathcal G \subset [K]; \left| \mathcal{G} \right| \ne 0, \nonumber \\
& \qquad \qquad \; \sum\limits_{k=1}^{K} M_k = M,
\end{align}
where 
\begin{align}\label{DetailedOptimizationOuterBoundVecEtaDef}
\boldsymbol{\eta} \buildrel \Delta \over =  \bigcup\limits_{\mathcal G \subset [K]: \left| \mathcal{G} \right| \ne 0} \eta^{\mathcal G}_{\pi_{\mathcal{G}}(1)}, \ldots, \eta^{\mathcal G}_{\pi_{\mathcal{G}}(\left| \mathcal G \right|)}.
\end{align}
\end{subequations}

\begin{remark}\label{RemLoosenessOuterBound}
The outer bound is not tight in general, particularly when the channel qualities are more skewed. This is due to the nature of the underlying model, where cache allocation is allowed, and the capacity is characterized as a function of the available total cache capacity, whereas the outer bound is specified for a particular cache allocation. Moreover, unlike the model studied in \cite{ShirinWiggerYenerCacheAssingment}, the asymmetry due to different rate delivery to different users increases the gap between the outer bound and the achievable schemes.
\end{remark}

\section{Numerical Results}\label{Comparison} 
In this section, we compare the achievable rate regions of the CTDC, CSC, and CDPC schemes for a caching system with $K=N=4$. We set the average power constraint to $P=2$, and the noise variance at user $k$ is assumed to be $\sigma_k^2 = 5-k$, for $k \in [4]$. We assume a total cache capacity of $M=2.5$.

\begin{figure}[!t]
\centering
\includegraphics[scale=0.5]{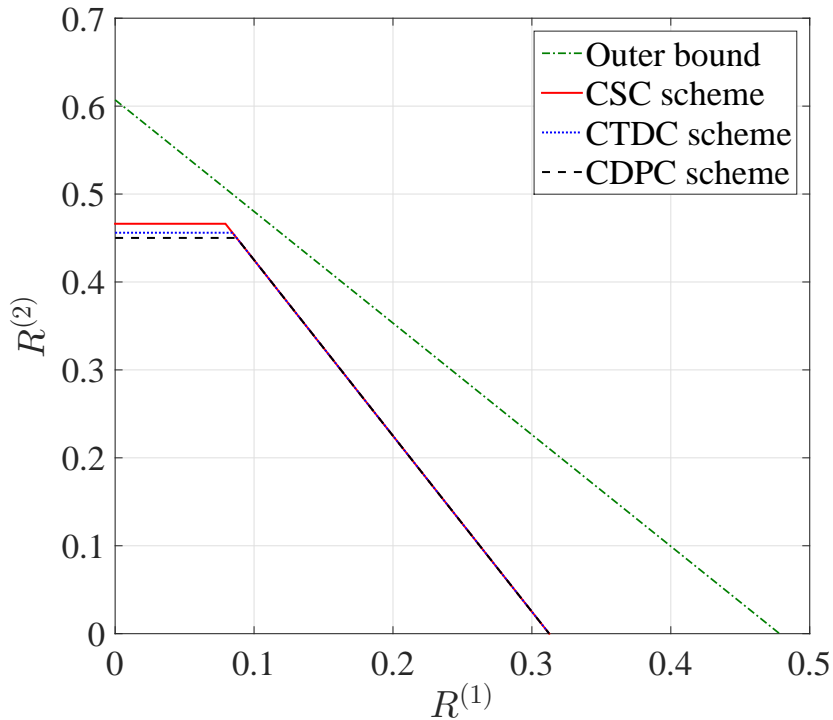}
\caption{Achievable rate pair $\left( R^{(1)},R^{(2)} \right)$ for a caching system with $K=N=4$, and $M=2.5$, where $R^{(3)}=0$, $t=2$ and $t_1=2$, $t_2=t_3=1$, and $t_4=0$. The noise variance at user $k$ is $\sigma_k^2 = 5-k$, for $k=1, ..., 4$, and we set $P=2$.}
\label{K4_N4_R1R2}
\end{figure}

We evaluate the performance in terms of the rate of different layers of the files, i.e., $R^{(1)}, \ldots,$ $R^{(K)}$, where $R_k = \sum\nolimits_{l=1}^{k} R^{(l)}$, for $k \in [K]$. We examine the performance of the CSC and CDPC schemes for $t=2$. Thus, the achievable rate tuple $\left( R_1, R_2, R_3, R_4 \right)$ presented in Theorem \ref{AchievablePairsCSCSchemeTheorem} can be achieved by the CSC and CDPC schemes, for $r_1=0$, $r_2=1$, and $r_1=1$, $r_2=0$, respectively, where $R_4=R_3$ since $R^{(4)}=0$. The boundary surface of the rate region achieved by the CSC and CDPC schemes are computed through the optimization problem given in \eqref{VectorRsuperlDef}. For the fairness of the comparison, we consider caching factors $t_1 = 2$, $t_2=t_3 =1$, and $t_4=0$.  
The boundary of the rate region achievable by CTDC can be calculated by the optimization problem in \eqref{DetailedOptimizationCTDC}, where, in order to have a fair comparison, we set $\lambda^{(4)}=0$ leading to $R^{(4)}=0$ and $R_4 = R_3$.

\begin{figure}[!t]
\centering
\includegraphics[scale=0.5]{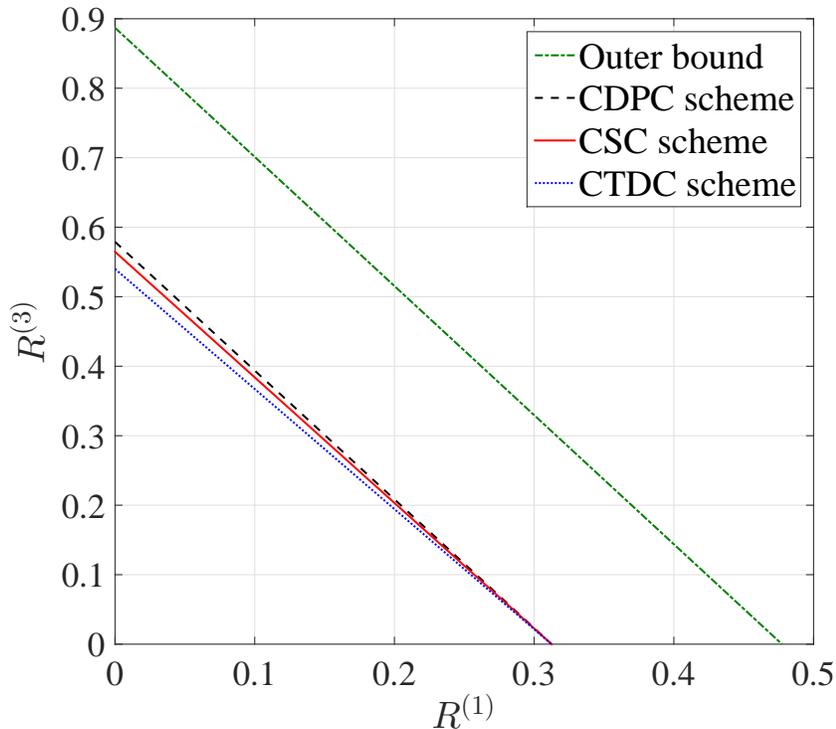}
\caption{Achievable rate pair $\left( R^{(1)},R^{(3)} \right)$ for a caching system with $K=N=4$, and $M=2.5$, where $R^{(2)} =0$, and $t=2$ and $t_1=2$, $t_2=t_3=1$, and $t_4=0$. The noise variance at user $k$ is $\sigma_k^2 = 5-k$, for $k=1, ..., 4$, and we set $P=2$.}
\label{K4_N4_R1R3}
\end{figure}

We investigate the convex hull of the achievable rate tuples calculated by the optimization problem corresponding to each of the CTDC, CSC, and CDPC schemes. Since the presentation of the three-dimensional rate region together with the outer bound does not provide a clear picture, here we fix one of the rates $R^{(1)}$, $R^{(2)}$ and $R^{(3)}$ and present the rate region on the two-dimensional planes corresponding to the other two rates. Two-dimensional plane of $\left( R^{(1)},R^{(2)} \right)$, $\left( R^{(1)},R^{(3)} \right)$ and $\left( R^{(2)},R^{(3)} \right)$ for $R^{(3)}=0$, $R^{(2)}=0$ and $R^{(1)}=0$ are illustrated in in Figures \ref{K4_N4_R1R2}, \ref{K4_N4_R1R3} and \ref{K4_N4_R2R3}, respectively, together with the outer bound presented in Theorem \ref{OuterBoundTheorem}. As it can be seen from the figures, for relatively small values of $R^{(1)}$, the CSC and CTDC schemes achieve higher values of $R^{(2)}$, while the CSC scheme outperforms the latter. For higher values of $R^{(1)}$, the improvement of the CSC scheme over CTDC and CDPC is negligible. For a fixed $R^{(1)}$ value, CDPC achieves higher values of $R^{(3)}$ compared to the other two achievable schemes, and CSC outperforms CTDC. On the other hand, given a relatively small value of $R^{(2)}$, CDPC improves upon the CSC and CTDC in terms of the achievable rate $R^{(3)}$, and CSC achieves higher values of $R^{(3)}$ than CTDC. As mentioned in Remark \ref{RemLoosenessOuterBound}, the outer bound is not tight in general; however, for any achievable rate tuple $\left( R_1, \ldots, R_4 \right)$, which is achieved with a specific cache allocation $M_1, \ldots, M_4$, the outer bound specialized to this cache allocation would be tighter.

\begin{figure}[!t]
\centering
\includegraphics[scale=0.5]{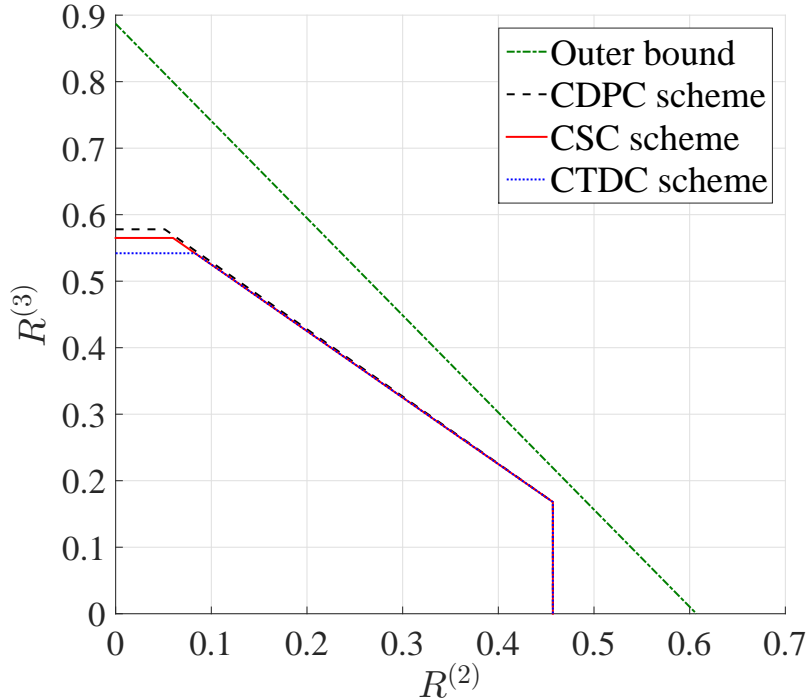}
\caption{Achievable rate pair $\left( R^{(2)},R^{(3)} \right)$ for a caching system with $K=N=4$, and $M=2.5$, where $R^{(1)}=0$, and $t=2$ and $t_1=2$, $t_2=t_3=1$, and $t_4=0$. The noise variance at user $k$ is $\sigma_k^2 = 5-k$, for $k=1, ..., 4$, and we set $P=2$.}
\label{K4_N4_R2R3}
\end{figure}

\section{Conclusions}\label{Conc}
We have studied cache-aided content delivery over a Gaussian BC, where each user is allowed to demand a file at a distinct rate. To model this asymmetry, we have assumed that the files are encoded into $K$ layers corresponding to $K$ users in the system, such that the $k$-th worst user is delivered only the $k$ layers of its demand, $k \in [K]$. We have considered a centralized placement phase, where the server knows the channel qualities of the links in the delivery phase in addition to the identity of the users. By allowing the users to have different cache capacities, we have defined the capacity region for a total cache capacity. We designed a placement phase through cache allocation across the users and the files' layers to maximize the rates allocated to different layers. We have proposed three achievable schemes, which deliver coded multicast packets, generated thanks to the contents carefully cached during the placement phase, through different channel coding techniques over the Gaussian BC. Although the coded multicast packets are intended for a set of users with distinct link capacities, channel coding techniques can be employed to deliver requested files such that the users with better channels achieve higher rates. We have also developed an outer bound on the capacity region assuming uncoded caching. We are currently working to reduce the gap between the inner and outer bounds.

\appendices

\section{Proof of Lemma \ref{LemmaBCProofOuterBound}}\label{ProffOfLemmaBC}

We follow the same steps as in \cite[Lemma 14]{ShirinWiggerYenerCacheAssingment}, but for multi-layer massages. Given a demand vector $\textbf{d}$ with all different entries and $\boldsymbol{\sigma}$ in the delivery phase, consider an achievable rate tuple $\left( R_1, \ldots, R_K \right)$. Thus, there exist $K$ caching functions ${\phi _{\boldsymbol{\sigma},1}}, \ldots, {\phi _{\boldsymbol{\sigma},K}}$, an encoding function $\psi_{\boldsymbol{\sigma},\textbf{d}}$, and $K$ decoding functions ${\mu _{\textbf{d},1}}, \ldots, {\mu _{\textbf{d},K}}$, which, for large enough $n$, ${P_{e}} < \varepsilon$, where $\varepsilon$ tends to 0 as $n \to \infty$. From Fano's inequality, we have
\begin{align}\label{ProofLemmaBCFano}
R_k - \varepsilon \le \frac{1}{n} I\left( \bigcup\limits_{l=1}^{k} W_{d_k}^{(l)};Y_{\textbf{d},k}^n,U_k \right), \quad \mbox{for $k \in [K]$}.
\end{align}
Accordingly,
\begin{subequations}
\label{ProofLemmaBCFanoR1}
\begin{align}\label{ProofLemmaBCFanoR11}
R_1 - {\varepsilon} \le & \frac{1}{n} I\left( W_{d_1}^{(1)};Y_{\textbf{d},1}^n,U_1 \right)\\
\label{ProofLemmaBCFanoR12}
= & \frac{1}{n} I\left( W_{d_1}^{(1)};U_1 \right) + \frac{1}{n} I\left( W_{d_1}^{(1)};Y_{\textbf{d},1}^n \left| U_1 \right. \right),
\end{align}
\end{subequations}
where the second term in \eqref{ProofLemmaBCFanoR12} can be bounded as follows:
\begin{subequations}
\label{ProofLemmaBCFanoR1SecondTerm}
\begin{align}\label{ProofLemmaBCFanoR1SecondTerm1}
\frac{1}{n} I\left( W_{d_1}^{(1)};Y_{\textbf{d},1}^n \left| U_1 \right. \right) = & \frac{1}{n} \sum\limits_{i=1}^{n}  I\left( W_{d_1}^{(1)};Y_{\textbf{d},1,i} \left| U_1,Y_{\textbf{d},1}^{i-1} \right. \right)\\
\label{ProofLemmaBCFanoR1SecondTerm2}
\le & \frac{1}{n} \sum\limits_{i=1}^{n}  I\left( W_{d_1}^{(1)},Y_{\textbf{d},1}^{i-1};Y_{\textbf{d},1,i} \left| U_1 \right. \right),
\end{align}
\end{subequations}
where we define $Y_{\textbf{d},k}^{i} \buildrel \Delta \over = \left( Y_{\textbf{d},k,1}, \ldots, Y_{\textbf{d},k,i} \right)$, for $k \in [K]$ and $i \in [n]$. Let $T$ be a random variable uniformly distributed over $[n]$ and independent from all other random variables. We have
\begin{subequations}
\label{ProofLemmaBCFanoR1SecondTermSec}
\begin{align}\label{ProofLemmaBCFanoR1SecondTermSec1}
\frac{1}{n} \sum\limits_{i=1}^{n}  I\left( W_{d_1}^{(1)},Y_{\textbf{d},1}^{i-1};Y_{\textbf{d},1,i} \left| U_1 \right. \right) = & I\left( W_{d_1}^{(1)},Y_{\textbf{d},1}^{T-1};Y_{\textbf{d},1,T} \left| U_1,T \right. \right)\\
\label{ProofLemmaBCFanoR1SecondTermSec2}
\le & I\left( W_{d_1}^{(1)},Y_{\textbf{d},1}^{T-1},U_1,T;Y_{\textbf{d},1,T} \right)\\
= & I\left( V_{\textbf{d},1};Y_{\textbf{d},1} \right),
\end{align}
\end{subequations}
where we define $V_{\textbf{d},1} \buildrel \Delta \over = \left( W_{d_1}^{(1)},Y_{\textbf{d},1}^{T-1},U_1,T \right)$, and $Y_{\textbf{d},1} \buildrel \Delta \over = \left( Y_{\textbf{d},1,T} \right)$. From \cref{ProofLemmaBCFanoR1,ProofLemmaBCFanoR1SecondTerm,ProofLemmaBCFanoR1SecondTermSec}, \eqref{ShirinAppendixLowerBound1} is proved. We also have, for $k \in [2:K]$,  
\begin{subequations}
\label{ProofLemmaBCFanoRk}
\begin{align}\label{ProofLemmaBCFanoRk1}
R_k - {\varepsilon} \le & \frac{1}{n} I\left( \bigcup\limits_{l=1}^{k} W_{d_k}^{(l)};Y_{\textbf{d},k}^n,U_k \right)\\
\label{ProofLemmaBCFanoRk2}
\le & \frac{1}{n} I\left( \bigcup\limits_{l=1}^{k} W_{d_k}^{(l)};Y_{\textbf{d},k}^n,U_k \left| \bigcup\limits_{m=1}^{k-1} \bigcup\limits_{l=1}^{m} W_{d_m}^{(l)} \right. \right)\\
\label{ProofLemmaBCFanoRk3}
\le & \frac{1}{n} I\left( \bigcup\limits_{l=1}^{k} W_{d_k}^{(l)};Y_{\textbf{d},k}^n,U_1,\ldots,U_k \left| \bigcup\limits_{m=1}^{k-1} \bigcup\limits_{l=1}^{m} W_{d_m}^{(l)} \right. \right)\\
\label{ProofLemmaBCFanoRk4}
= & \frac{1}{n} I\left( \bigcup\limits_{l=1}^{k} W_{d_k}^{(l)};U_1,\ldots,U_k \left| \bigcup\limits_{m=1}^{k-1} \bigcup\limits_{l=1}^{m} W_{d_m}^{(l)} \right. \right) +\nonumber\\
& \frac{1}{n} I\left( \bigcup\limits_{l=1}^{k} W_{d_k}^{(l)}; Y_{\textbf{d},k}^n \left| U_1,\ldots,U_k, \bigcup\limits_{m=1}^{k-1} \bigcup\limits_{l=1}^{m} W_{d_m}^{(l)} \right. \right),
\end{align}
\end{subequations}
where \eqref{ProofLemmaBCFanoRk2} follows from the independence of the files. We now bound the second term in \eqref{ProofLemmaBCFanoRk4} as follows:
\begin{subequations}
\label{ProofLemmaBCFanoRkSecondTerm}
\begin{align}\label{ProofLemmaBCFanoRkSecondTerm1}
\frac{1}{n} &I\left( \bigcup\limits_{l=1}^{k} W_{d_k}^{(l)}; Y_{\textbf{d},k}^n \left| U_1,\ldots,U_k, \bigcup\limits_{m=1}^{k-1} \bigcup\limits_{l=1}^{m} W_{d_m}^{(l)} \right. \right) \nonumber\\
& = \frac{1}{n} \sum\limits_{i=1}^{n}  I\left( \bigcup\limits_{l=1}^{k} W_{d_k}^{(l)}; Y_{\textbf{d},k,i} \left| U_1,\ldots,U_k, \bigcup\limits_{m=1}^{k-1} \bigcup\limits_{l=1}^{m} W_{d_m}^{(l)}, Y_{\textbf{d},k}^{i-1} \right. \right)\\
\label{ProofLemmaBCFanoRkSecondTerm2}
& \le \frac{1}{n} \sum\limits_{i=1}^{n}  I\left( \bigcup\limits_{l=1}^{k} W_{d_k}^{(l)}; Y_{\textbf{d},k,i} \left| U_1,\ldots,U_k, \bigcup\limits_{m=1}^{k-1} \bigcup\limits_{l=1}^{m} W_{d_m}^{(l)}, Y_{\textbf{d},k}^{i-1}, Y_{\textbf{d},k-1}^{i-1}, \ldots, Y_{\textbf{d},1}^{i-1} \right. \right)\\
\label{ProofLemmaBCFanoRkSecondTerm3}
& \le \frac{1}{n} \sum\limits_{i=1}^{n}  I\left( \bigcup\limits_{l=1}^{k} W_{d_k}^{(l)}, Y_{\textbf{d},k}^{i-1}; Y_{\textbf{d},k,i} \left| U_1,\ldots,U_k, \bigcup\limits_{m=1}^{k-1} \bigcup\limits_{l=1}^{m} W_{d_m}^{(l)}, Y_{\textbf{d},k-1}^{i-1}, \ldots, Y_{\textbf{d},1}^{i-1} \right. \right)\\
\label{ProofLemmaBCFanoRkSecondTerm4}
& = I\left( \bigcup\limits_{l=1}^{k} W_{d_k}^{(l)}, Y_{\textbf{d},k}^{T-1}; Y_{\textbf{d},k,T} \left| U_1,\ldots,U_k, \bigcup\limits_{m=1}^{k-1} \bigcup\limits_{l=1}^{m} W_{d_m}^{(l)}, Y_{\textbf{d},k-1}^{T-1}, \ldots, Y_{\textbf{d},1}^{T-1},T \right. \right)\\
\label{ProofLemmaBCFanoRkSecondTerm5}
& \le I\left( \bigcup\limits_{l=1}^{k} W_{d_k}^{(l)}, Y_{\textbf{d},k}^{T-1},U_k; Y_{\textbf{d},k,T} \left| U_1,\ldots,U_{k-1}, \bigcup\limits_{m=1}^{k-1} \bigcup\limits_{l=1}^{m} W_{d_m}^{(l)}, Y_{\textbf{d},k-1}^{T-1}, \ldots, Y_{\textbf{d},1}^{T-1},T \right. \right)\\
\label{ProofLemmaBCFanoRkSecondTerm6}
& = I\left( {V}_{\textbf{d},k}; Y_{\textbf{d},k} \left| {V}_{\textbf{d},k-1} \right. \right),
\end{align}
\end{subequations}
where $V_{\textbf{d},k} \buildrel \Delta \over = \left( V_{\textbf{d},k-1},\bigcup\limits_{l=1}^{k} W_{d_k}^{(l)}, Y_{\textbf{d},k}^{T-1},U_k \right)$, and $Y_{\textbf{d},k} \buildrel \Delta \over = \left( Y_{\textbf{d},k,T} \right)$, for $k \in [2:K]$. We also note that $V_{\textbf{d},K} = X_{\textbf{d}}$. By plugging \eqref{ProofLemmaBCFanoRkSecondTerm} into \eqref{ProofLemmaBCFanoRk}, the proof of Lemma \ref{LemmaBCProofOuterBound} is completed.

\bibliographystyle{IEEEtran}
\bibliography{Report}

\end{document}